\newcommand{\comment}[1]{}
\newcommand{\shortdividerline}{
\begin{center} \line(1,0){150} \end{center}
}
\newcommand{\dividerline}{\begin{center}\hrule\end{center}}
\newcommand{\scriptf}{\mathcal{F}}
\newcommand{\scripte}{\mathcal{E}}
\newcommand{\scriptv}{\mathcal{V}}
\newenvironment{proof}{\paragraph{\bf Proof:}}{\hspace*{\fill}\(\Box\)}
\newtheorem{theorem}{Theorem}
\newtheorem{claim}{Claim}
\newtheorem{corollary}{Corollary}
\newtheorem{definition}{Definition}
\newtheorem{lemma}{Lemma}
\def\noflash#1{\setbox0=\hbox{#1}\hbox to 1\wd0{\hfill}}
\begin{document}

\title{Iterative Approximate Byzantine Consensus\\ in Arbitrary Directed Graphs
 \footnote{\normalsize This research is supported
in part by
 National
Science Foundation award CNS 1059540. Any opinions, findings, and conclusions or recommendations expressed here are those of the authors and do not
necessarily reflect the views of the funding agencies or the U.S. government.}}

\author{Nitin Vaidya$^{1,3}$, Lewis Tseng$^{2,3}$, and Guanfeng Liang$^{1,3}$\\~\\
 \normalsize $^1$ Department of Electrical and Computer Engineering, \\
 \normalsize $^2$ Department of Computer Science,
 and\\ \normalsize $^3$ Coordinated Science Laboratory\\ \normalsize University of Illinois at Urbana-Champaign\\ \normalsize Email: \{nhv, ltseng3, gliang2\}@illinois.edu~\\~\\Technical Report\thanks{ \normalsize This report is a modified version of a previous technical report (Nitin Vaidya, Lewis Tseng, and Guanfeng Liang. Iterative Approximate Byzantine Consensus 
in Arbitrary Directed Graphs. CoRR, abs/1201.4183, January 19 2012. http://arxiv.org/abs/1201.4183)}}

\date{February 2, 2012}
\maketitle


\thispagestyle{empty}

\newpage

\section{Introduction}
\label{s_intro}

In this paper, we explore the problem of iterative approximate Byzantine consensus in arbitrary directed graphs. In particular, we prove a necessary and sufficient condition for the existence of iterative Byzantine consensus algorithms. Additionally, we use our sufficient condition to examine whether such algorithms exist for some specific graphs.

Approximate Byzantine consensus \cite{AA_Dolev_1986} is a natural extension of original Byzantine Generals (or Byzantine consensus) problem \cite{psl_BG_1982}. 
The goal in approximate consensus is to allow the fault-free nodes
to agree on values that are approximately equal to each other.
There exist iterative algorithms for the approximate consensus problem that work correctly in fully connected graphs \cite{AA_Dolev_1986,AA_nancy} when the number of
nodes $n$ exceeds $3f$, where $f$ is the upper bound on the number of failures.
 In \cite{AA_Fekete_aoptimal}, Fekete studies the convergence rate of approximate consensus
algorithms. 
 Ben-Or et al. develop an algorithm based on Gradcast to solve approximate consensus efficiently in a fully connected network \cite{gradecast_benor_2010}.

There have been attempts at achieving approximate consensus iteratively
in partially connected graphs. In \cite{AA_PCN_Local}, Kieckhafer and Azadmanesh examined the necessary conditions in order to achieve ``local'' convergence and performed a case study on global convergence in some special graphs. Later, they extended their work to asynchronous systems \cite{AA_async_PCN}. In \cite{AA_PFCN}, Azadmanesh et al. showed how to build a special network, called Partially Fully Connected Network, in which global convergence is achieved. Srinivasan and Azadmanesh studied the application of iterative approximate consensus in data aggregation, and developed an analytical approach using Markov chains \cite{DA_PCN_Markov,DA_PCN_Markov_conf}.

In \cite{DFC_Byzantine}, Sundaram and Hadjicostis explored Byzantine-fault tolerant distributed function calculation in an arbitrary
network assuming a {\em broadcast model}. Under the {\em broadcast}
model, every transmission of a node is received by all its neighbors.
Hence, faulty nodes can send false data, but they have to send exactly the same piece of data to all their neighbors. They proved that distributed function calculation is possible if network connectivity is at least $2f+1$. Their algorithm maintains more ``history'' (a sequence of previous states) than the iterative algorithms considered in this paper. 

In \cite{IBA_broadcast_Sundaram}, Zhang and Sundaram studied the sufficient conditions for iterative consensus algorithm under ``f-local" fault model. They also provided a construction of graphs satisfying the sufficient conditions.

LeBlanc and Koutsoukos \cite{leblanc1} address a continuous time version of the Byzantine consensus problem in {\em complete} graphs.
Recently, for the {\em broadcast} model, 
LeBlanc et al. have independently developed necessary and sufficient conditions for $f$-fault tolerant approximate consensus in arbitrary graphs \cite{diss_Sundaram}; in \cite{leblanc2} they have developed some {\em sufficient}
conditions for correctness of a
class of iterative consensus algorithms.

 To the best of our knowledge, characterization of {\em tight necessary and sufficient} conditions for iterative approximate consensus in arbitrary directed graphs in the presence of Byzantine faults under point-to-point model is still an open problem. Iterative approximate consensus algorithms without any fault tolerance capability (i.e., $f=0$) in arbitrary graphs have been explored extensively. The proof of convergence presented in this paper is inspired by the prior work on non-fault-tolerant algorithms \cite{AA_convergence_markov}.

\section{Preliminaries}
\label{sec:models}

\subsection{Network Model}
The network is modeled as a simple {\em directed} graph $G(\scriptv,\scripte)$, where $\scriptv=\{1,\dots,n\}$ is the set of $n$ nodes, and $\scripte$ is the set of directed edges between nodes in $\scriptv$. We use the terms ``edge'' and ``link'' interchangeably.
We assume that $n\geq 2$, since the consensus problem for $n=1$ is trivial.
If a directed edge $(i,j)\in \scripte$, then node $i$ can reliably transmit to node $j$.
For convenience,
we exclude self-loops from $\scripte$, although every node is allowed to send messages
to itself. We also assume that all edges are authenticated, such that when a node $j$ receives a message from node $i$ (on edge $(i,j)$), it can correctly determine that the message was sent by node $i$.
For each node $i$, let $N_i^-$ be the set of nodes from which $i$ has incoming
edges.
That is, $N_i^- = \{\, j ~|~ (j,i)\in \scripte\, \}$.
Similarly, define $N_i^+$ as the set of nodes to which node $i$
has outgoing edges. That is, $N_i^+ = \{\, j ~|~ (i,j)\in \scripte\, \}$.
By definition, $i\not\in N_i^-$ and $i\not\in N_i^+$.
However, we emphasize that each node can indeed send messages to itself.
The network is assumed to be synchronous.

\subsection{Failure Model}

We consider the Byzantine failure model, with up to $f$ nodes becoming faulty. A faulty node may misbehave arbitrarily. Possible misbehavior includes sending incorrect and mismatching messages to different neighbors. The faulty nodes may potentially collaborate with each other. Moreover, the faulty nodes are assumed to have a complete knowledge of the state of the other nodes in the system and a complete knowledge of specification of the algorithm.

\subsection{Iterative Approximate Byzantine Consensus}

We consider iterative Byzantine consensus as follows:
\begin{itemize}
\item Up to $f$ nodes in the network may be Byzantine faulty.
\item Each node starts with an {\em input}, which is assumed to be a single
real number.
\item Each node $i$ maintains state $v_i$, with $v_i[t]$ denoting the state
of node $i$ at the end of the $t$-th iteration of the algorithm.
$v_i[0]$ denotes the initial state of node $i$, which is set equal to its
{\em input}.
Note that, at the start of the $t$-th iteration ($t>0$), the state of
node $i$ is $v_i[t-1]$.
\item The goal of an approximate consensus algorithm is
to allow each node to compute an {\em output} in {\em each iteration}
with the following two properties:
\begin{itemize}
\item {\bf Validity:} The output of each node is within the convex hull of the
inputs at the {\em fault-free}\, nodes.
\item {\bf Convergence:}
The outputs of the different fault-free nodes converge to an identical value
as $t\rightarrow \infty$.
\end{itemize}
\item {\bf Output constraint:} For the family of iterative algorithms considered in this paper,
output of node $i$
at time $t$ is {\bf equal to} its state $v_i[t]$.
\end{itemize}
The iterative algorithms will be implemented as follows:
\begin{itemize}
\item At the start of $t$-th iteration, $t\ge 1$, each node $i$ sends $v_i[t-1]$ on all its
outgoing links (to nodes in $N_i^+$).

\item Denote by $r_i[t]$ the vector of values received by node $i$ from
nodes in $N_i^-$ at time $t$. The size of vector $r_i[t]$ is $|N_i^-|$.
\item Node $i$ updates its state using some transition function $Z_i$ as
follows,
where $Z_i$ is part of the specification of the algorithm:
\[ v_i[t] = Z_i(r_i[t],v_i[t-1]) \]

Since the inputs are real numbers, and because we impose the above {\em output
constraint}, the state of each node in each iteration is also viewed as
a real number.
\end{itemize}
The function $Z_i$ may be dependent on the network topology. However,
as seen later, for convergence, it suffices for each node $i$ to know 
$N_i^-$.

Observe that, given the state of the nodes at time $t-1$, their state at time
$t$ is independent of the prior history. The evolution of the state of the nodes
may, therefore, be modeled by a Markov chain (although we will not use that
approach in this paper).

We now introduce some notations.
\begin{itemize}
\item Let $\scriptf$ denote the set of Byzantine faulty nodes, where $|\scriptf|\leq f$.
Thus, the set of fault-free nodes is $\scriptv-\scriptf$.
\footnote{For sets $X$ and $Y$, $X-Y$ contains elements
that are in $X$ but not in $Y$. That is,
$X-Y=\{i~|~ i\in X,~i\not\in Y\}$.
}
\item $U[t] = \max_{i\in\scriptv-\scriptf}\,v_i[t]$. $U[t]$
is the largest state among the fault-free nodes (at time $t$).
Recall that, due
to the {\em output constraint}, the state
of node $i$ at the end of iteration $t$ (i.e., $v_i[t]$) is also its output in
iteration $t$.
\item $\mu[t] = \min_{i\in\scriptv-\scriptf}\,v_i[t]$. $\mu[t]$ is the smallest state
among the fault-free nodes at time $t$ (we will use the phrase ``at time $t$''
interchangeably with ``at the end of $t$-th iteration'').
\end{itemize}

With the above notation, we can restate the {\em validity} and
{\em convergence} conditions as follows:
\begin{itemize}
\item {\bf Validity:} $\forall t>0$, $U[t]\le U[0]$ and $\mu[t]\ge \mu[0]$
\item {\bf Convergence:} $\lim_{\,t\rightarrow\infty} ~ U[t]-\mu[t] = 0$
\end{itemize}


The {\em output constraint} and the {\em validity}
condition together imply that the iterative algorithms of interest
do not maintain a ``sense of time''. In particular, the iterative
computation by the algorithm, as captured in functions $Z_i$, cannot
explicitly take the elapsed time (or $t$) into account.\footnote{In a practical 
implementation, the algorithm may keep track of time, for instance, to
decide to terminate after a certain number of iterations.}
Due to this, the validity condition for algorithms of interest here
becomes:
\begin{eqnarray}
\noindent
\mbox{\bf Validity:}~~~~~\forall t>0,~U[t]\le U[t-1]~\mbox{and}~\mu[t]\ge \mu[t-1]
\hspace*{2.5in}~
\label{e_validity}
\end{eqnarray}
In the discussion below, when we refer to the validity condition,
we mean (\ref{e_validity}).

For illustration, below we present Algorithm 1 that satisfies
the {\em output constraint}.
The algorithm has been proved to achieve {\em validity}
and {\em convergence} in fully connected
graphs with $n>3f$ \cite{AA_Dolev_1986, AA_nancy}. We will later address correctness of this
algorithm in arbitrary graphs.\\ Here, we assume that each node $v \in \scriptv$ has at least $2f$ incoming links. That is $|N_i^-| \geq 2f$. Later, we will show that there is no iterative Byzantine consensus if this condition does not hold.

\dividerline

\noindent
{\bf Algorithm 1}\\
Steps that should be performed by each node $i\in \scriptv$ in the
$t$-th iteration are as follows. Note that the faulty nodes may
deviate from this specification.
Output of node $i$ at time $t$ is $v_i[t]$.
\begin{enumerate}
\item Transmit current state $v_i[t-1]$ on all outgoing edges.
\item Receive values on all incoming edges (these values form
vector $r_i[t]$ of size $|N_i^-|$).
\item Sort the values in $r_i[t]$ in an increasing order, and eliminate
the smallest $f$ values, and the largest $f$ values (breaking ties
arbitrarily). Let $N_i^*[t]$ denote the identifiers of nodes from
whom the remaining $N_i^- - 2f$ values were received, and let
$w_j$ denote the value received from node $j\in N_i^*$.
Then, $|N_i^*[t]| = |N_i^-| - 2f$.
By definition, $i\not\in N_i^*[t]$.
Note that if $j\in \{i\}\cup N_i^*[t]$ is fault-free, then $w_j=v_j[t-1]$.
Define
\begin{eqnarray}
v_i[t] = Z_i(r_i[t],v_i[t-1]) =\sum_{j\in \{i\}\cup N_i^*[t]} a_i \, w_j
\label{e_Z}
\end{eqnarray}
where
\[ a_i = \frac{1}{|N_i^-|+1-2f}
\] 

The ``weight'' of each term on the right side of
(\ref{e_Z}) is $a_i$, and these weights add to 1.
Also, $0<a_i\leq 1$.
For future reference, let us define $\alpha$ as:
\begin{eqnarray}
\alpha = \min_{i\in \scriptv}~a_i
\label{e_alpha}
\end{eqnarray}

\end{enumerate}

\dividerline

\comment{++++++++++++++++++++++++++++++++++++

\begin{lemma}
\label{lemma:psi}
Consider node $i\in\scriptv-\scriptf$.
Let $\psi\leq \mu[t-1]$. Then, for $j\in \{i\}\cup N_i^*[t]$,
\[
v_i[t] - \psi  \geq  a_i ~ (w_j - \psi)
\]
Specifically, for fault-free $j\in \{i\}\cup N_i^*[t]$,
\[
v_i[t] - \psi  \geq  a_i ~ (v_j[t-1] - \psi)
\]
\end{lemma}
\begin{proof}
In (\ref{e_Z}), for each $j\in N_i^*[t]$, consider two cases:
\begin{itemize}
\item Either $j=i$ or  $j\in N_i^*[t]\cap (\scriptv-\scriptf)$: Thus, $j$ is fault-free.
In this case, $w_j=v_j[t-1]$. Therefore,
$\mu[t-1] \leq w_j\leq U[t-1]$.
\item $j$ is faulty: In this case, $f$ must be non-zero (otherwise,
all nodes are fault-free).  
From Corollary~\ref{cor:2f+1}, $|N_i^-|\geq 2f+1$.
Then it follows that, in step 2 of Algorithm 1, the smallest $f$
values in $r_i[t]$ contain the state of at least one fault-free node,
say $k$.
This implies that $v_k[t-1] \leq w_j$.
This, in turn, implies that
$\mu[t-1] \leq w_j.$
\end{itemize}
Thus, for all $j\in \{i\}\cup N_i^*[t]$, we have $\mu[t-1] \leq w_j$.
Therefore,
\begin{eqnarray}
w_j-\psi\geq 0 \mbox{\normalfont~for all~} j\in\{i\} \cup N_i^*[t]
\label{e_algo_1}
\end{eqnarray}
Since weights in Equation~\ref{e_Z} add to 1, we can re-write that equation
as,
\begin{eqnarray}
v_i[t] - \psi &=& \sum_{j\in\{i\}\cup N_i^*[t]} a_i \, (w_j-\psi) \\
\nonumber
&\geq& a_i\, (w_j-\psi), ~~\forall j\in \{i\}\cup N_i^*[t]  ~~~~~\mbox{\normalfont from (\ref{e_algo_1})} \\
&\geq a_i\, (w_j-v_{k_1(j)} - \psi), ~\forall j\in N_i^*[t]  ~~~~~\mbox{\normalfont from (\ref{e_algo_1})}
\label{e_Z_min}
\end{eqnarray}

\end{proof}

\begin{lemma}
\label{lemma:Psi}
Consider node $i\in\scriptv-\scriptf$.
Let $\Psi\geq \mu[t-1]$. Then, for $j\in \{i\}\cup N_i^*[t]$,
\[
\Psi - v_i[t] \geq  a_i ~ (\Psi - w_j)
\]
Specifically, for fault-free $j\in \{i\}\cup N_i^*[t]$,
\[
\Psi - v_i[t] \geq  a_i ~ (\Psi - v_j[t-1])
\]
\end{lemma}
The proof of Lemma~\ref{lemma:Psi} is similar to that of
Lemma~\ref{lemma:psi}. The proof is presented in Appendix~\ref{append:Psi}.

+++++++++++++++++++++++++++++++++++++++++ }

\section{Necessary Condition}
\label{s_necessary}


For an iterative Byzantine approximate consensus algorithm satisfying the
{\em output constraint}, the {\em validity} condition,
 and the {\em convergence} condition
 to exist, the underlying graph $G(\scriptv,\scripte)$
must satisfy a necessary condition proved in this section.
We now define relations $\Rightarrow$
and $\not\Rightarrow$ that are used frequently in our proofs.

\begin{definition}
\label{def:absorb}
For non-empty disjoint sets of nodes $A$ and $B$,
$A \Rightarrow B$ iff there exists a node $v\in B$ that has at least $f+1$ incoming
links from nodes in $A$, i.e., $|N_v^-\cap A|>f$.\\
$A\not\Rightarrow B$ iff $A\Rightarrow B$ is {\em not} true.
\end{definition}

\dividerline

\begin{theorem}
\label{thm:nc}
Let sets $F,L,C,R$ form a partition\footnote{Sets $X_1,X_2,X_3,...,X_p$
are said to form a partition of set $X$ provided that (i) $\cup_{1\leq i\leq p} X_i = X$
and $X_i\cap X_j=\Phi$ when $i\neq j$.} of $\scriptv$, such that
\begin{itemize}
\item $0 \leq |F|\le f$,
\item $0<|L|$, and
\item $0<|R|$
\end{itemize}
Then, at least one of the two conditions below must be true.
\begin{itemize}
\item $C\cup R\Rightarrow L$
\item $L\cup C\Rightarrow R$
\end{itemize}
\end{theorem}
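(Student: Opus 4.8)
The plan is to prove the contrapositive: assuming an algorithm satisfying the \emph{validity} and \emph{convergence} conditions exists, I will show that every partition as in the statement makes at least one of the two relations hold. So suppose, for contradiction, that a partition $F,L,C,R$ exists with both $C\cup R\not\Rightarrow L$ and $L\cup C\not\Rightarrow R$. By Definition~\ref{def:absorb}, the first failure means every $u\in L$ satisfies $|N_u^-\cap(C\cup R)|\le f$, and the second means every $w\in R$ satisfies $|N_w^-\cap(L\cup C)|\le f$. I will exhibit a single execution $E$ in which two fault-free nodes never converge.

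In $E$, let $F$ be the (at most $f$) faulty nodes, and give every node in $L$ the input $0$, every node in $R$ the input $1$, and every node in $C$ an arbitrary input. The faulty nodes in $F$ send the value $0$ on each edge into $L$ and the value $1$ on each edge into $R$ (a Byzantine node may send different values on different outgoing edges). I claim that $v_u[t]=0$ for all $u\in L$ and $v_w[t]=1$ for all $w\in R$, for every $t$, which I will establish by induction on $t$; the base case is immediate from the inputs.

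The heart of the argument is the inductive step for a fixed $u\in L$, handled by comparison with an auxiliary execution $W_u$. In $W_u$ I designate $S_u:=N_u^-\cap(C\cup R)$ as the faulty set; since $C\cup R\not\Rightarrow L$ we have $|S_u|\le f$, so $W_u$ is a legitimate execution. Every node outside $S_u$ is fault-free in $W_u$ and is given input $0$; in particular $u$ and all of $u$'s in-neighbors in $L\cup F$ are fault-free there. The faulty nodes $S_u$ are programmed to send to $u$ exactly the values their counterparts send in $E$. Because all fault-free inputs in $W_u$ equal $0$, the validity condition forces every fault-free node of $W_u$ to remain at $0$ for all time. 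Now I compare the data $u$ receives in the two executions at step $t$: messages from $S_u$ agree by construction; messages from $u$'s in-neighbors in $F$ equal $0$ in both (the programmed $0$'s in $E$ and the validity-pinned $0$'s in $W_u$); and messages from $u$'s in-neighbors in $L$ equal $0$ in both, by the induction hypothesis in $E$ and by validity in $W_u$. Hence $u$'s received vector $r_u[t]$ and its previous state coincide in $E$ and $W_u$, so the transition function $Z_u$ returns the same value, and by validity in $W_u$ that value is $0$; thus $v_u[t]=0$. The symmetric construction---an auxiliary execution with faulty set $N_w^-\cap(L\cup C)$, of size $\le f$ by $L\cup C\not\Rightarrow R$, and all fault-free inputs equal to $1$---gives $v_w[t]=1$ for each $w\in R$.

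With the claim established, choose any $u\in L$ and $w\in R$ (both nonempty by hypothesis); both are fault-free in $E$, so $U[t]\ge v_w[t]=1$ and $\mu[t]\le v_u[t]=0$, whence $U[t]-\mu[t]\ge 1$ for all $t$, contradicting convergence. The step I expect to be the crux is the auxiliary-world construction: the argument hinges on the observation that $u$'s evolution depends only on its own local view, so I need only reproduce the messages entering $u$ rather than the whole global state, and on the fact that the negated relation $C\cup R\not\Rightarrow L$ is exactly what keeps the set I must declare faulty within the budget $f$. A secondary point to get right is that validity, applied in a world where every fault-free input is identical, is what pins the fault-free nodes in place and supplies the consistent $0$-messages (respectively $1$-messages) that render the two views indistinguishable.
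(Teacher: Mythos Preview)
Your proof is correct and uses the same indistinguishability idea as the paper: with $F$ declared faulty and $L,R$ pinned at distinct constants, each $u\in L$ cannot tell the real run from one in which its (at most $f$) in-neighbors in $C\cup R$ are the faulty set, so validity forces $v_u[t]$ to stay put, and symmetrically for $R$. Your packaging via an explicit auxiliary execution $W_u$ with all fault-free inputs equal to $0$ (and having $F$ send the matching value $0$ rather than the paper's out-of-range $m^-$) is somewhat cleaner than the paper's four-case analysis at a single iteration followed by ``by induction,'' but the underlying argument is the same.
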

\begin{proof}
The proof is by contradiction.
Let us assume that a correct iterative consensus algorithm exists,
and $C\cup R\not\Rightarrow L$ and $L\cup C\not\Rightarrow R$.
Thus, for any $i\in L$, $|N_i^-\cap (C\cup R)|<f+1$,
and 
$j\in R$, $|N_j^-\cap (L\cup C)|<f+1$,
Figure~\ref{f_thm1} illustrates the sets used in this proof.

\begin{figure}
\centering
\includegraphics[width=0.7\textwidth]{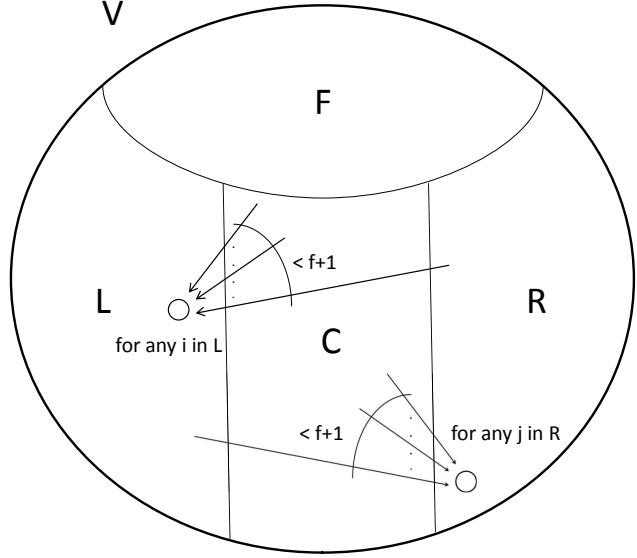}
\caption{Illustration for the proof of Theorem \ref{thm:nc}. In this figure, $C\cup R \not\Rightarrow L$ and $L\cup C \not\Rightarrow R$.}
\label{f_thm1}
\end{figure}

Also assume that the nodes in $F$ (if $F$ is non-empty) are all faulty,
and the remaining nodes, in sets $L,R,C$, are fault-free. Note that the fault-free nodes
are not necessarily aware of the identity of the faulty nodes.

Consider the case when (i) each node in $L$ has input $m$, (ii) each
node in $R$ has input $M$, such that $M>m$,
and (iii) each node in $C$, if $C$ is non-empty,
has an input in the range
$[m,M]$.

At the start of iteration 1, suppose that the faulty nodes in $F$ (if non-empty)
send $m^- < m$ to nodes in $L$, send $M^+ > M$ to nodes in $R$, and
send some arbitrary value in $[m,M]$ to the nodes in $C$ (if $C$ is
non-empty).
This behavior is possible since nodes in $F$ are faulty.
Note that $m^-<m<M<M^+$.
Each fault-free node $k\in\scriptv-\scriptf$, sends to nodes
in $N_k^+$ value $v_k[0]$ in iteration 1.

Consider any node $i \in L$. Denote $N'(i) = N_i^- \cap (C\cup R)$.
Since $C\cup R\not\Rightarrow L$, $|N'(i)|\le f$.
Node $i$ will then receive $m^-$ from the nodes in $F \cap N_i^-$,
and values in $[m,M]$ from the nodes in $N'(i)$, and
$m$ from the nodes in $\{i\}\cup (L\cap N_i^-)$.

Consider four cases:
\begin{itemize}
\item $F$ and $N'(i)$ are both empty:
In this case, all the values that $i$ receives are from nodes in $\{i\}\cup( L\cap N_i^-)$,
and are identical to $m$. By validity condition (\ref{e_validity}), node $i$ must set its
new state, $v_i[1]$, to be $m$ as well.

\item $F$ is empty and $N'(i)$ is non-empty:
In this case, since $|N'(i)|\leq f$, from $i$'s perspective,
it is possible that all the nodes in
$N'(i)$ are faulty, and the rest of the nodes are fault-free. In this
situation, the values sent to node $i$ by the fault-free nodes (which are
all in $\{i\}\cup (L\cap N_i^-))$ are all $m$, and therefore, $v_i[1]$ must be set to $m$
as per the validity condition (\ref{e_validity}).

\item $F$ is non-empty and $N'(i)$ is empty:
In this case, since $|F|\leq f$, it is possible that all the nodes in $F$ are faulty,
and the rest of the nodes are fault-free. In this
situation, the values sent to node $i$ by the fault-free nodes (which are
all in $\{i\}\cup (L\cap N_i^-))$ are all $m$, and therefore, $v_i[1]$ must be set to $m$
as per the validity condition (\ref{e_validity}).

\item Both $F$ and $N'(i)$ are non-empty:
From node $i$'s perspective, consider two possible scenarios:
(a) nodes in $F$ are faulty, and the other
nodes are fault-free, and (b) nodes in $N'(i)$ are faulty, and the
other nodes are fault-free.

In scenario (a), from node $i$'s perspective, the non-faulty nodes have values
in $[m,M]$ whereas the faulty nodes have value $m^-$. According to the validity
condition (\ref{e_validity}), $v_i[1] \geq m$. On the other hand, in scenario (b), the
non-faulty nodes have values $m^-$ and $m$, where $m^-<m$; so $v_i[1] \leq m$, according to
the validity condition (\ref{e_validity}). Since node $i$ does not know whether the
correct scenario is (a) or (b), it must update its state to satisfy the
validity condition in both cases. Thus, it follows that $v_i[1] = m$.
\end{itemize}
Observe that in each case above $v_i[1]=m$ for each node $i\in L$.
Similarly, we can show that $v_j[1] = M$ for each node $j \in R$.

Now consider the nodes in set $C$, if $C$ is non-empty.
All the values received by the nodes in $C$ are in $[m,M]$, therefore,
their new state must also remain in $[m,M]$, as per the validity condition.

The above discussion implies that, at the end of the first iteration,
the following conditions hold true: (i) state of each node in $L$ is
$m$, (ii) state of each node in $R$ is $M$, and (iii) state of each node
in $C$ is in $[m,M]$. These conditions are identical to the initial conditions
listed previously. Then, by induction, it follows that for
any $t \geq 0$, $v_i[t] = m, \forall i \in L$, and $v_j[t] = M, \forall j \in R$.
Since $L$ and $R$ contain fault-free nodes, the convergence requirement
is not satisfied. This is a contradiction to the assumption that a correct
iterative algorithm exists.
\end{proof}

\dividerline

\begin{corollary}
\label{cor:nc2}
Let $\{F,L,R\}$ be a partition of $\scriptv$, such that $0\leq |F|\le f$, and
$L$ and $R$ are non-empty. Then, either $L\Rightarrow R$ or $R\Rightarrow L$.
\end{corollary}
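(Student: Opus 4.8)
The plan is to obtain Corollary~\ref{cor:nc2} as an immediate specialization of Theorem~\ref{thm:nc}, taking the middle set $C$ to be empty. Given a partition $\{F,L,R\}$ of $\scriptv$ with $0\leq|F|\leq f$ and with $L,R$ both non-empty, I would form the four-set partition $\{F,L,C,R\}$ of $\scriptv$ in which $C=\emptyset$.

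First I would verify that this quadruple satisfies the hypotheses of Theorem~\ref{thm:nc}. The three listed constraints there are $0\leq|F|\leq f$, $0<|L|$, and $0<|R|$, all of which hold by assumption; crucially, the theorem imposes no positivity requirement on $C$, so $C=\emptyset$ is admissible. This reading is consistent with the footnote definition of partition (which does not forbid empty blocks) and with the proof of the theorem itself, which everywhere handles $C$ conditionally through the clause ``if $C$ is non-empty.''

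Next I would apply Theorem~\ref{thm:nc} to conclude that at least one of $C\cup R\Rightarrow L$ or $L\cup C\Rightarrow R$ holds. Substituting $C=\emptyset$ gives $C\cup R=R$ and $L\cup C=L$, so these two alternatives become exactly $R\Rightarrow L$ and $L\Rightarrow R$. Since $L$ and $R$ are non-empty and disjoint, the relation $\Rightarrow$ of Definition~\ref{def:absorb} is well-defined on each ordered pair, and the conclusion of the theorem specializes precisely to the statement of the corollary.

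Because the argument is a direct reduction to the already-proved theorem, I do not expect any genuine obstacle. The only point requiring care is confirming that the notion of ``partition'' used in Theorem~\ref{thm:nc} permits an empty block $C$, which both the footnote definition and the theorem's own proof (via its ``if $C$ is non-empty'' case analysis) support.
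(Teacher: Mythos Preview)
Your proposal is correct and matches the paper's own proof, which likewise sets $C=\Phi$ in Theorem~\ref{thm:nc}. Your additional remarks verifying that an empty $C$ is admissible are sound and simply make explicit what the paper leaves implicit.
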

\begin{proof}
The proof follows by setting $C=\Phi$ in Theorem~\ref{thm:nc}.
\end{proof}

\dividerline

While the two corollaries below are also proved in prior literature \cite{impossible_proof_lynch},
we derive them again using the necessary condition above.
\begin{corollary}
\label{cor:3f}
The number of nodes $n$ must exceed $3f$ for
the existence of a correct iterative consensus algorithm tolerating $f$ failures.
\end{corollary}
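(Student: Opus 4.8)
The plan is to argue by contradiction using Corollary~\ref{cor:nc2}. The key observation is that the relation $\Rightarrow$ of Definition~\ref{def:absorb} requires the existence of a node in the target set receiving at least $f+1$ incoming links from the source set; in particular, a source set of size at most $f$ can never satisfy $\Rightarrow$ with respect to any other set. So if I can split $\scriptv$ into a partition $\{F,L,R\}$ in which both $L$ and $R$ have size at most $f$ (with $|F|\le f$ and $L,R$ non-empty), then both $L\not\Rightarrow R$ and $R\not\Rightarrow L$, which directly contradicts Corollary~\ref{cor:nc2}.

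First I would dispose of the trivial case $f=0$, where the claim $n>0$ holds because $n\ge 2$ by assumption. For $f\ge 1$, I assume for contradiction that $n\le 3f$. I then partition the $n$ nodes into three sets whose sizes differ by at most one; each such set has size at most $\lceil n/3\rceil \le \lceil 3f/3\rceil = f$. Since $n\ge 2$, at least two of the three sets are non-empty, so I may label two non-empty sets as $L$ and $R$ and the remaining (possibly empty) set as $F$. This yields a valid partition for Corollary~\ref{cor:nc2}: $0\le |F|\le f$, $|L|>0$, and $|R|>0$.

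Finally, because $|L|\le f$, for every node $v\in R$ we have $|N_v^-\cap L|\le |L|\le f<f+1$, so no node in $R$ has $f+1$ incoming links from $L$; hence $L\not\Rightarrow R$. By the symmetric argument using $|R|\le f$, I also obtain $R\not\Rightarrow L$. This contradicts Corollary~\ref{cor:nc2}, which guarantees that at least one of $L\Rightarrow R$ or $R\Rightarrow L$ must hold. Therefore the assumption $n\le 3f$ is untenable, and $n>3f$ is necessary.

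The main obstacle—really the only subtlety—is verifying that the balanced three-way split always produces parts of size at most $f$ while keeping $L$ and $R$ non-empty; this rests on the bound $\lceil n/3\rceil\le f$ when $n\le 3f$ together with $n\ge 2$. Everything else is immediate from the definition of $\Rightarrow$ and from Corollary~\ref{cor:nc2}.
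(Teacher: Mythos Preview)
Your argument is correct and follows essentially the same route as the paper: assume $n\le 3f$, exhibit a partition $\{F,L,R\}$ with $|F|\le f$ and $|L|,|R|\le f$ (both non-empty), and then observe that a set of size at most $f$ can never satisfy $\Rightarrow$, contradicting Corollary~\ref{cor:nc2}. The only difference is cosmetic: the paper handles two cases ($n\le 2f$ with $F=\Phi$, and $2f<n\le 3f$ with $|L|=|R|=f$), whereas you use a single balanced three-way split; your version is marginally cleaner but the underlying idea is identical.
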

\begin{proof}
The proof is by contradiction.
Suppose that $2\leq n\leq 3f$, and consider the following two cases:
\begin{itemize}
\item $2\leq n\leq 2f$: Suppose that $L,R,F$ is a partition of $\scriptv$
such that $|L|=\lceil n/2 \rceil\leq f$,
$|R|=\lfloor n/2 \rfloor\leq f$ and $F=\Phi$. Note that $L$ and $R$
are non-empty, and $|L|+|R|=n$.
\item $2f<n\leq 3f$:
Suppose that $L,R,F$ is a partition of $\scriptv$,
such that $|L|=|R|=f$ and $|F|=n-2f$. Note that
$0<|F|\leq f$.
\end{itemize}
In both cases above, Corollary~\ref{cor:nc2} is applicable. Thus,
either $L\Rightarrow R$ or $R\Rightarrow L$.
For $L\Rightarrow R$ to be true, $L$ must contain at least $f+1$ nodes.
Similarly, for $R\Rightarrow L$ to be true, $R$ must contain at least
$f+1$ nodes. Therefore, at least one of the sets $L$ and $R$ must contain more than
$f$ nodes. This contradicts our choice of $L$ and $R$ above
(in both cases, size of $L$ and $R$ is $\leq f$).
Therefore, $n$ must be larger than $3f$. 
\end{proof}

\dividerline


\begin{corollary}
\label{cor:2f+1}
When $f>0$, for each node $i\in\scriptv$,
$|N_i^-|\geq 2f+1$, i.e., each node $i$
has at least $2f+1$ incoming links.
\end{corollary}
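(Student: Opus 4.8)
The plan is to argue by contradiction: assume some node $i$ has $|N_i^-|\le 2f$, and then build a partition $\{F,L,R\}$ to which Corollary~\ref{cor:nc2} applies but whose conclusion fails. The guiding idea is that isolating $i$ in a singleton set $L=\{i\}$ makes the relation $L\Rightarrow R$ impossible to satisfy whenever $f\ge 1$: by Definition~\ref{def:absorb}, $L\Rightarrow R$ would require some node of $R$ to have at least $f+1$ incoming links from $L$, yet any node can receive at most one incoming link from the single node $i$, which is strictly fewer than $f+1$. So the entire burden of Corollary~\ref{cor:nc2} would fall on $R\Rightarrow L$, and I will choose $F$ precisely so as to destroy that possibility as well.

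Concretely, I would set $L=\{i\}$ and pick $F$ to be a subset of $i$'s in-neighbors $N_i^-$ of size $\max(0,\,|N_i^-|-f)$, letting $R=\scriptv-(\{i\}\cup F)$ collect everything else. The contradiction hypothesis $|N_i^-|\le 2f$ gives $|N_i^-|-f\le f$, hence $|F|\le f$, so $F$ is an admissible fault set. Since $F\subseteq N_i^-$ and $i\notin N_i^-$, the in-neighbors of $i$ that remain in $R$ number $|N_i^-\cap R|=|N_i^-|-|F|=\min(|N_i^-|,f)\le f$. Thus node $i$, the only node of $L$, has at most $f$ incoming links from $R$, so $R\not\Rightarrow L$; combined with $L\not\Rightarrow R$ from the singleton observation, both alternatives of Corollary~\ref{cor:nc2} fail, which is the desired contradiction.

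The one premise of Corollary~\ref{cor:nc2} that still needs checking, and the only place the argument could stumble, is that $R$ must be non-empty. I would verify this via Corollary~\ref{cor:3f}, which gives $n>3f$: since $|L|=1$ and $|F|\le f$, we have $|R|=n-1-|F|\ge (3f+1)-1-f=2f\ge 2>0$ because $f\ge 1$. Disjointness and coverage of $\{F,L,R\}$ are immediate from the construction, so all hypotheses of Corollary~\ref{cor:nc2} are met and the contradiction is complete; therefore $|N_i^-|\ge 2f+1$ for every $i$. The main obstacle here is not any delicate estimate but rather spotting the right partition: the key insight is that taking $L=\{i\}$ trivially kills one direction of the relation, while drawing $F$ from $N_i^-$ suppresses the other.
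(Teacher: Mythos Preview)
Your proof is correct and follows essentially the same approach as the paper: both set $L=\{i\}$, take $F$ to be a subset of $N_i^-$ of size at most $f$ (the paper uses $|F|=\lceil |N_i^-|/2\rceil$, you use $|F|=\max(0,|N_i^-|-f)$; either choice leaves at most $f$ of $i$'s in-neighbors in $R$), and then derive a contradiction with Corollary~\ref{cor:nc2}. Your explicit check that $R\neq\Phi$ via Corollary~\ref{cor:3f} is a small point of added rigor that the paper leaves implicit.
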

\begin{proof}
The proof is by contradiction. Suppose that for some node $i$,
$|N_i^-|\leq 2f$.
Define set $L=\{i\}$.
Partition $N_i^-$ into two sets $F$ and $H$
such that $|H|=\lfloor |N_i^-|/2\rfloor\leq f$
and $|F|=\lceil |N_i^-|/2\rceil\leq f$.
Define $R=V-F-L=V-F-\{i\}$.
Thus, $N_i^-\cap R=H$, and $|N_i^-\cap R|\leq f$.
Therefore, since $L=\{i\}$ and  $|N_i^-\cap R|\leq f$, $R\not\Rightarrow L$.
Also, since $|L|=1<f+1$, $L\not\Rightarrow R$.

This violates Corollary~\ref{cor:nc2}.
\end{proof}

\comment{ +++++++++++++++++++++++++++++++++++++++++++++++++++++++++

THIS MAY BE ONLY FOR UNDIRECTED GRAPHS ---- COULD BE ADDED TO A LATER SECTION

\begin{theorem}
\label{thm:2f+1}
The connectivity of $G(\scriptv,\scripte)$ is at least $2f+1$. 
\end{theorem}

\begin{proof}

Suppose on the contrary that the connectivity of $G(\scriptv,\scripte)$ is at most $2f$ and there exists an IBA algorithm $A$ that satisfies both the validity and convergence requirements. Since the connectivity of $G$ is at most $2f$, there exist a set of $2f$ nodes removing which makes the remaining graph disconnected. Let $F$ and $Mid$ be two disjoint subset of $f$ nodes such that removing $F\cup Mid$ partitions the remaining nodes into two disjoint sets $Left$ and $Right$. Since $|\scriptv|\ge 2f+2$, both $Left$ and $Right$ are non-empty.
So in the original graph $G(\scriptv,\scripte)$, nodes in $Left$  only have neighbors in $Left\cup Mid \cup F$, and nodes in $Right$ only have neighbors in $Right\cup Mid\cup F$.

Now consider the scenario in which every node $i\in Left$ is given initial value $v_i[0] = m$, every node $j\in Right$ is given initial value $v_j[0]=M>m$, and each node $k\in Mid$ are given some initial values $v_k[0]\in [m,M]$. Nodes in $F$ are faulty. The faulty nodes send $m^-<m$ to nodes in $Left$, send $M^+>M$ to nodes in $Right$, and send arbitrary values to nodes in $Mid$. 

Consider any node $i\in Left$. Node $i$ receives $m^-$ from its faulty neighbors in $N(i)\cap F$, $m$ from its non-faulty neighbors in $N(i)\cap Left$, and at most $f$ values $\ge m$ from its non-faulty neighbors in $N(i)\cap Mid$. Similar to the proof of Theorem \ref{thm:nc2}, we can show that, using algorithm $A$, $v_i[1]=m$ for all $i\in Left$. 

Similarly, $v_j[1]=M$ for all $j\in Right$. And according to the validity requirement, $v_k[1]\in [m,M]$ for all $k\in Mid$. So by the end of the first iteration, the state values of the non-faulty nodes in $Left\cup Right$ are not changed, and the state values of the non-faulty nodes in $Mid$ remain within the range $[m,M]$. Then by induction, it follows that, using algorithm $A$,  $v_i[t]=m,~\forall i\in Left$, $v_j[t]=M,~\forall j\in Right$, and $v_k[t]\in [m,M],~\forall k\in Mid$ for all $t\ge 0$. Then the convergence requirement is not satisfied, which contradicts the assumption of $A$ being an IBA algorithm.

\end{proof}

+++++++++++++++++++++++++++++++++++++++++++++++ end comment }

\section{Useful Lemmas}
\label{s_useful}

\begin{definition}
For disjoint sets $A,B$, 
$in(A \Rightarrow B)$ denotes the set of
all the nodes in $B$ that each have at least $f+1$ incoming links from
nodes in $A$. More formally,
\[
in(A\Rightarrow B) = \{~v~|v\in B \mbox{\normalfont~and~}~f+1\leq |N_v^-\cap A|~\}
\]

With a slight abuse of notation, when $A\not\Rightarrow B$, define $in(A\Rightarrow B)=\Phi$.
\end{definition}

\comment{================ do we use the definition of $in$ above when $A\not\Rightarrow B$ ?
This is OK, but need to clarify that we can apply the above definition even in those
cases. For instance, ``when $A\not\Rightarrow B$, we have
$in(A\Rightarrow B)=\Phi$''.  A slight abuse of notation here.  =======================}

\shortdividerline

\begin{definition}
\label{def:absorb_sequence}
For {\em non-empty disjoint} sets $A$ and $B$, set $A$ is said to {\bf propagate to} set $B$ in $l$ steps, where $l>0$,
if there exist sequences of sets $A_0,A_1,A_2,\cdots,A_l$ and $B_0,B_1,B_2,\cdots,B_l$ (propagating sequences) such that

\begin{itemize}
\item $A_0=A$, $B_0=B$, $B_l=\Phi$, and, for $\tau<l$, $B_\tau \neq \Phi$.
\item for $0\leq \tau\leq l-1$,
\begin{list}{}{}
\item[*] $A_\tau\Rightarrow B_\tau$,
\item[*] $A_{\tau+1} = A_\tau\cup in(A_\tau\Rightarrow B_\tau)$, and
\item[*] $B_{\tau+1} = B_\tau - in(A_\tau\Rightarrow B_\tau)$
\end{list}
\end{itemize}
\end{definition}
Observe that $A_\tau$ and $B_\tau$ form a partition of $A\cup B$,
and for $\tau<l$, $in(A_\tau\Rightarrow B_\tau)\neq \Phi$.
Also, when set $A$ propagates to set $B$, length $l$ above is
necessarily finite. In particular, $l$ is upper bounded by $n-f-1$, since set $A$ must
be of size at least $f+1$ for it to propagate to $B$.
\dividerline

\begin{lemma}
\label{lemma:absorb_condition}
Assume that $G(\scriptv,\scripte)$ satisfies Theorem~\ref{thm:nc}.
Consider a partition $A,B,F$ of $\scriptv$ such that
$A$ and $B$ are non-empty, and $|F|\leq f$.
If $B \not\Rightarrow A$, then set $A$ propagates to set $B$.
\end{lemma}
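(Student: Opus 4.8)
The plan is to build a propagating sequence greedily and to show that the greedy process cannot halt before $B$ is exhausted. Concretely, I would set $A_0=A$, $B_0=B$, and, as long as $B_\tau\neq\Phi$, define $S_\tau=in(A_\tau\Rightarrow B_\tau)$, $A_{\tau+1}=A_\tau\cup S_\tau$, and $B_{\tau+1}=B_\tau - S_\tau$. Because every node ever moved to the $A$-side is taken from $B_\tau\subseteq B$, two inclusions persist throughout the construction: $B_\tau\subseteq B$ and $A_\tau - A\subseteq B$. These two inclusions are the crux of the argument, so I would verify first that they are preserved by the update rule (an easy induction, since $A_0-A=\Phi$ and each $S_\tau\subseteq B_\tau$).

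The central claim I would prove is that the process never gets stuck: whenever $B_\tau\neq\Phi$, we have $A_\tau\Rightarrow B_\tau$, equivalently $S_\tau\neq\Phi$. I would argue this by contradiction. Suppose at some step $B_\tau\neq\Phi$ but $A_\tau\not\Rightarrow B_\tau$. Apply Theorem~\ref{thm:nc} to the partition of $\scriptv$ given by $L=B_\tau$, $R=A$, $C=A_\tau - A$, and $F=F$. One checks that these four sets are pairwise disjoint and cover $\scriptv=F\cup A_\tau\cup B_\tau$, that $|F|\le f$, that $|L|=|B_\tau|>0$ by the stuck assumption, and that $|R|=|A|>0$ by hypothesis (the set $C$ is permitted to be empty, which is what happens at $\tau=0$). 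Theorem~\ref{thm:nc} then forces $C\cup R\Rightarrow L$ or $L\cup C\Rightarrow R$. The first alternative reads $A_\tau\Rightarrow B_\tau$, directly contradicting the stuck assumption. The second alternative reads $B_\tau\cup(A_\tau-A)\Rightarrow A$; but $B_\tau\cup(A_\tau-A)\subseteq B$ by the inclusions above, so this exhibits a node of $A$ with at least $f+1$ incoming neighbors inside $B$, i.e.\ $B\Rightarrow A$, contradicting the hypothesis $B\not\Rightarrow A$. Both alternatives being impossible establishes the claim.

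Given the claim, each non-terminal step strictly shrinks $B_\tau$ (since $S_\tau\neq\Phi$), so after finitely many steps some $B_l=\Phi$; since $B_0=B\neq\Phi$ we have $l>0$. The resulting sequences $A_0,\dots,A_l$ and $B_0,\dots,B_l$ then satisfy every requirement of Definition~\ref{def:absorb_sequence}, so $A$ propagates to $B$, as desired.

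I expect the only genuine subtlety to be the choice of partition fed into Theorem~\ref{thm:nc}: taking $R$ to be the \emph{original} set $A$ rather than the current $A_\tau$, and dumping the already-absorbed nodes $A_\tau-A$ into $C$, is exactly what keeps $L\cup C$ confined to $B$ and lets the hypothesis $B\not\Rightarrow A$ do its work. The remaining bookkeeping (partition check and the persistence of the inclusions $B_\tau,\,A_\tau-A\subseteq B$) is routine but must be carried out to license the two contradictions.
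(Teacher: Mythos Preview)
Your argument is correct and is essentially the paper's own proof. The paper builds the same greedy sequence, assumes it stalls at some step with $B_{\tau+1}\neq\Phi$ and $A_{\tau+1}\not\Rightarrow B_{\tau+1}$, and applies Theorem~\ref{thm:nc} to the partition $L=A_0$, $C=A_{\tau+1}-A_0$, $R=B_{\tau+1}$, $F$; you do the same thing with the roles of $L$ and $R$ interchanged (and with the stuck index called $\tau$ rather than $\tau+1$), which is purely cosmetic. One small observation: your set $L\cup C=B_\tau\cup(A_\tau-A)$ is in fact exactly $B$, not merely a subset, so the second alternative of Theorem~\ref{thm:nc} directly reads $B\Rightarrow A$; and your uniform treatment of the base case $\tau=0$ (with $C=\Phi$) absorbs what the paper handles separately via Corollary~\ref{cor:nc2}.
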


\begin{proof}
Since $A,B$ are non-empty, and $B\not\Rightarrow A$, by
Corollary~\ref{cor:nc2}, we have $A\Rightarrow B$.

The proof is by induction.
Define $A_0=A$ and $B_0=B$.
Thus $A_0\Rightarrow B_0$ and $B_0\not\Rightarrow A_0$.
Note that $A_0$ and $B_0$ are non-empty.

\noindent
{\em Induction basis}: For some $\tau\geq 0$,
\begin{itemize}
\item for $0\leq k < \tau$, $A_k\Rightarrow B_k$, and $B_k\neq \Phi$,
\item either $B_\tau=\Phi$ or $A_\tau\Rightarrow B_\tau$,
\item for $0\leq k< \tau$,
$A_{k+1} = A_k \cup in(A_k\Rightarrow B_k)$, and
$B_{k+1}=B_k - in(A_k\Rightarrow B_k)$
\end{itemize}
Since $A_0\Rightarrow B_0$, 
the induction basis holds true for $\tau=0$.
 
\noindent
{\em Induction:}
If $B_\tau=\Phi$, then the proof is complete, since all
the conditions specified in Definition~\ref{def:absorb_sequence} are satisfied
by the sequences of sets $A_0,A_1,\cdots,A_\tau$ and $B_0,B_1,\cdots,B_\tau$.

\begin{figure}[h]
\centering
\includegraphics[width=0.7\textwidth]{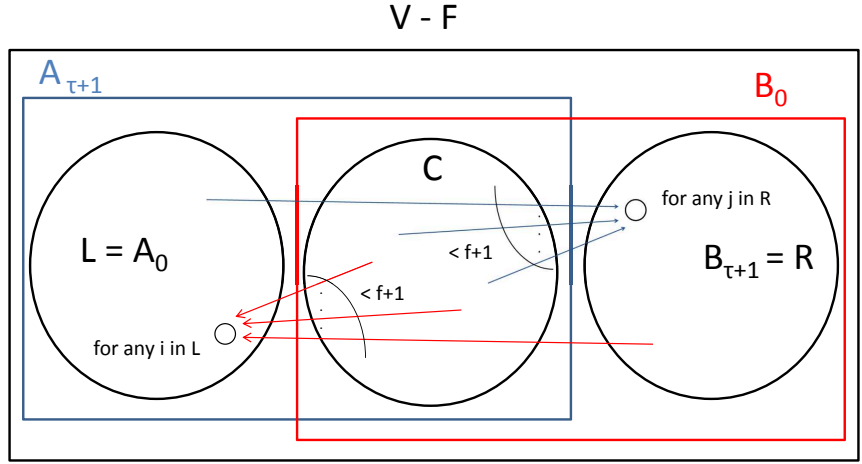}
\caption{Illustration for the proof Lemma \ref{lemma:absorb_condition}. In this figure, $B_0\not\Rightarrow A_0$ and $A_{\tau+1}\not\Rightarrow B_{\tau+1}$}.
\label{f_lemma1}
\end{figure}

Now consider the case when $B_\tau\neq \Phi$. By assumption,
$A_k\Rightarrow B_k$, for $0\leq k\leq \tau$.
Define $A_{\tau+1} = A_\tau \cup in(A_\tau\Rightarrow B_\tau)$ and $B_{\tau+1}=B_\tau - in(A_\tau\Rightarrow B_\tau)$.
Our goal is to prove that either $B_{\tau+1}=\Phi$ or $A_{\tau+1}\Rightarrow B_{\tau+1}$.
If $B_{\tau+1}=\Phi$, then the induction is complete. Therefore, now let us assume
that $B_{\tau+1}\neq \Phi$ and prove that $A_{\tau+1}\Rightarrow B_{\tau+1}$.
 We will prove this by contradiction.

Suppose that $A_{\tau+1}\not\Rightarrow B_{\tau+1}$.
Define subsets $L,C,R$ as follows: $L=A_0$, $C=A_{\tau+1}-A_0$ and $R=B_{\tau+1}$. Figure~\ref{f_lemma1} illustrates the sets used in this proof. 
Due to the manner in which $A_k$'s and $B_k$'s
are defined, we also have $C=B_0-B_{\tau+1}$.
Observe that $L,C,R,F$ form a partition of $\scriptv$, where $L,R$ are
non-empty, and the following relationships hold:
\begin{itemize}
\item $C\cup R = B_0$, and
\item $L\cup C=A_{\tau+1}$
\end{itemize}
Rewriting $B_0\not\Rightarrow A_0$ and $A_{\tau+1}\not\Rightarrow B_{\tau+1}$,
using the above relationships, we have, respectively,
\[
C\cup R\not\Rightarrow L,
\]
and
\[
L\cup C\not\Rightarrow R
\]
This violates the necessary condition in Theorem~\ref{thm:nc}.
This is a contradiction, completing the induction.

Thus, we have proved that, either (i) $B_{\tau+1}=\Phi$,
or (ii) $A_{\tau+1}\Rightarrow B_{\tau+1}$.
Eventually, for large enough $t$, $B_t$ will become $\Phi$, resulting
in the propagating sequences $A_0,A_1,\cdots, A_t$ and
$B_0,B_1,\cdots,B_t$, satisfying the conditions in Definition~\ref{def:absorb_sequence}.
Therefore, $A$ propagates to $B$.

\end{proof}

\dividerline

\begin{lemma}
\label{lemma:must_absorb}
Assume that $G(\scriptv,\scripte)$ satisfies Theorem~\ref{thm:nc}.
For any partition $A,B,F$ of $\scriptv$, where $A,B$ are both non-empty,
and $|F|\leq f$,
at least one of the following conditions must be true:
\begin{itemize}
\item $A$ propagates to $B$, or
\item $B$ propagates to $A$
\end{itemize}
\end{lemma}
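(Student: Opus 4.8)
The plan is to argue by contradiction: suppose that \emph{neither} $A$ propagates to $B$ \emph{nor} $B$ propagates to $A$, and derive a violation of Theorem~\ref{thm:nc}. The first observation I would record is that the propagating sequences of Definition~\ref{def:absorb_sequence} are \emph{forced}: once $A_0=A$ and $B_0=B$ are fixed, the recurrences $A_{\tau+1}=A_\tau\cup in(A_\tau\Rightarrow B_\tau)$ and $B_{\tau+1}=B_\tau-in(A_\tau\Rightarrow B_\tau)$ determine the whole sequence, and since $A_\tau$ strictly grows while $in(A_\tau\Rightarrow B_\tau)\neq\Phi$, this greedy process must halt. Hence ``$A$ does not propagate to $B$'' means precisely that the process gets \emph{stuck}: there is a stage $\sigma$ with $B_\sigma\neq\Phi$ and $A_\sigma\not\Rightarrow B_\sigma$ (with $\sigma=0$ if $A\not\Rightarrow B$ outright). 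Running the analogous greedy process in the other direction, with $P_0=B$ and $Q_0=A$, the assumption that $B$ does not propagate to $A$ yields a stage $\rho$ with $Q_\rho\neq\Phi$ and $P_\rho\not\Rightarrow Q_\rho$. In both processes I would use the fact (noted after Definition~\ref{def:absorb_sequence}) that the two sets partition $A\cup B$, so that $A_\sigma=(A\cup B)-B_\sigma$ and $P_\rho=(A\cup B)-Q_\rho$.

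The key step is to feed both stuck configurations into a \emph{single} application of Theorem~\ref{thm:nc}. I would take the partition $F,L,C,R$ of $\scriptv$ defined by $L=Q_\rho$, $R=B_\sigma$, and $C=\scriptv-F-Q_\rho-B_\sigma$. Here $L\subseteq A$ and $R\subseteq B$ are disjoint and non-empty (by the two stuck conditions), and $|F|\leq f$, so Theorem~\ref{thm:nc} applies and forces $C\cup R\Rightarrow L$ or $L\cup C\Rightarrow R$. The point is that these two alternatives collapse onto the stuck conditions: using $\scriptv-F=A\cup B$ together with the partition identities above, $C\cup R=(A\cup B)-Q_\rho=P_\rho$ and $L\cup C=(A\cup B)-B_\sigma=A_\sigma$. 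Thus the first alternative reads $P_\rho\Rightarrow Q_\rho$ and the second reads $A_\sigma\Rightarrow B_\sigma$, each directly contradicting a stuck condition. This contradiction establishes the lemma.

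The main obstacle is recognizing that a one-sided appeal to Lemma~\ref{lemma:absorb_condition} is not enough. That lemma (together with Corollary~\ref{cor:nc2}) immediately settles every case in which one of $B\not\Rightarrow A$ or $A\not\Rightarrow B$ holds; what it leaves open is the symmetric case $A\Rightarrow B$ \emph{and} $B\Rightarrow A$, which Corollary~\ref{cor:nc2} does not exclude. In that case Lemma~\ref{lemma:absorb_condition} gives no information in either direction, so I cannot treat it as a black box. The resolution is exactly the construction above: I have to expose the explicit stuck stages of \emph{both} greedy processes and combine them, and the delicate bookkeeping is verifying the set identities $A_\sigma=(A\cup B)-B_\sigma$ and $P_\rho=(A\cup B)-Q_\rho$ so that the two conclusions of Theorem~\ref{thm:nc} align precisely with the two ``$\not\Rightarrow$'' stuck conditions.
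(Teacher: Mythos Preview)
Your argument is correct, and it takes a genuinely different route from the paper's. The paper does not run both greedy processes; instead it treats one direction only: if $A$ does not propagate to $B$, it runs the $A$-side process until it stalls at some $(A_k,B_k)$ with $A_k\not\Rightarrow B_k$, then invokes Lemma~\ref{lemma:absorb_condition} on the partition $A_k,B_k,F$ to conclude that $B_k$ propagates to $A_k$, and finally uses a monotonicity argument (relegated to Appendix~\ref{app:lemma:must_absorb}) to lift this to ``$B$ propagates to $A$'' from $B_k\subseteq B$ and $A\subseteq A_k$. Your proof, by contrast, runs \emph{both} greedy processes to their stall points $(A_\sigma,B_\sigma)$ and $(P_\rho,Q_\rho)$ and feeds them into a \emph{single} application of Theorem~\ref{thm:nc} via the choice $L=Q_\rho$, $R=B_\sigma$, $C=(A\cup B)-Q_\rho-B_\sigma$; the set identities $C\cup R=P_\rho$ and $L\cup C=A_\sigma$ then make the two disjuncts of Theorem~\ref{thm:nc} coincide exactly with the two stall conditions. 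This is more symmetric and dispenses entirely with the monotonicity lemma of Appendix~\ref{app:lemma:must_absorb}; the paper's version is more modular in that it reuses Lemma~\ref{lemma:absorb_condition} as a black box, at the cost of that extra bookkeeping step.
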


\begin{proof}
Consider two cases:
\begin{itemize}
\item $A\not\Rightarrow B$: Then by Lemma \ref{lemma:absorb_condition},
$B$ propagates to $A$, completing the proof.

\item $A\Rightarrow B$: In this case, consider two sub-cases:
\begin{itemize}
\item $A$ propagates to $B$: The proof in this case is complete.

\item $A$ does not propagate to $B$:
Thus, propagating sequences defined in Definition~\ref{def:absorb_sequence}
do not exist in this case. More precisely, there must exist $k>0$,
and sets $A_0,A_1,\cdots,A_k$ and $B_0,B_1,\cdots,B_k$,
such that:
\begin{itemize}
\item $A_0=A$ and $B_0=B$, and
\item for $0\leq i\leq k-1$,
\begin{list}{}{}
\item[o] $A_i\Rightarrow B_i$,
\item[o] $A_{i+1} = A_i\cup in(A_i\Rightarrow B_i)$, and
\item[o] $B_{i+1} = B_i - in(A_i\Rightarrow B_i)$.
\end{list}
\item $B_{k}\neq \Phi$ and $A_{k}\not\Rightarrow B_{k}$.
\end{itemize}
The last condition above violates the requirements for $A$ to propagate
to $B$.

Now $A_{k}\neq \Phi$, $B_k\neq \Phi$, and $A_k,B_k,F$ form
a partition of $\scriptv$. Since $A_{k}\not\Rightarrow B_{k}$,
by Lemma \ref{lemma:absorb_condition},
$B_k$ propagates to $A_k$.

Since $B_k\subseteq B_0 = B$, $A\subseteq A_k$, and $B_k$ propagates
to $A_k$, it should be easy to see that $B$ propagates to $A$. The proof is presented in the Appendix \ref{app:lemma:must_absorb} for completeness.

%
\end{itemize}
\end{itemize}
\end{proof}

\dividerline

\section{Sufficiency}
\label{s_sufficiency}

We prove that the necessary condition in Theorem~\ref{thm:nc} is sufficient.
In particular, we will prove that Algorithm 1 satisfies
{\em validity} and {\em convergence} conditions when the necessary condition
is satisfied.

In the discussion below, assume that graph $G(\scriptv,\scripte)$ satisfies
Theorem~\ref{thm:nc}, and that $\scriptf$ is the set of faulty nodes in the network.
Thus, the nodes in $\scriptv-\scriptf$ are fault-free.  
Since Theorem~\ref{thm:nc} holds for $G(\scriptv,\scripte)$,
all the subsequently developed
corollaries and lemmas in Sections~\ref{s_necessary} and \ref{s_useful}
also hold for $G(\scriptv,\scripte)$.

\begin{theorem}
\label{thm:validity}
Suppose that $G(\scriptv, \scripte)$ satisfies Theorem~\ref{thm:nc}.
Then Algorithm 1 satisfies the {\em validity} condition (\ref{e_validity}).
\end{theorem}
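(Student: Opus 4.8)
The plan is to show that for every fault-free node $i$, the updated state $v_i[t]$ lies in the interval $[\mu[t-1], U[t-1]]$. Once this is established, taking the maximum over all fault-free nodes gives $U[t]\le U[t-1]$ and taking the minimum gives $\mu[t]\ge \mu[t-1]$, which is precisely the validity condition (\ref{e_validity}).

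First I would fix a fault-free node $i$ and look at the weighted sum (\ref{e_Z}) defining $v_i[t]$. Since the weights $a_i$ are strictly positive and sum to $1$, $v_i[t]$ is a convex combination of the surviving values $w_j$ for $j\in\{i\}\cup N_i^*[t]$. A convex combination of numbers lying in an interval stays in that interval, so it suffices to prove that each surviving $w_j$ satisfies $\mu[t-1]\le w_j\le U[t-1]$. For a fault-free index $j$ (including $j=i$) this is immediate, since then $w_j=v_j[t-1]$, which by definition of $U$ and $\mu$ lies in $[\mu[t-1],U[t-1]]$. When $f=0$ there are no faulty nodes, so every surviving value is of this form and we are done.

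The key step, and the point I expect to be the main obstacle, is bounding $w_j$ when $j$ is faulty, since a faulty node may report an arbitrary real number. Here I would combine the trimming in step 3 of Algorithm 1 with a counting argument; I treat the upper bound, the lower bound being symmetric. Assume $f>0$, so by Corollary~\ref{cor:2f+1} we have $|N_i^-|\ge 2f+1$ and the trimmed set $N_i^*[t]$ is well-defined and nonempty. Suppose $w_j$ survives the trimming even though $j$ is faulty. Because the $f$ largest received values were discarded and $w_j$ was not, there are at least $f$ received values that are $\ge w_j$, namely the discarded largest ones. At most $f-1$ of the nodes contributing those $f$ discarded values can be faulty, since at most $f$ nodes are faulty in total and $j$ itself is one of them; hence at least one of those nodes, say $k$, is fault-free. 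Then $w_j\le v_k[t-1]\le U[t-1]$. The mirror-image argument on the discarded $f$ smallest values yields $w_j\ge \mu[t-1]$.

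Combining these observations, every $w_j$ entering the convex combination lies in $[\mu[t-1],U[t-1]]$, and therefore so does $v_i[t]$. Since $i$ was an arbitrary fault-free node, maximizing and minimizing over $\scriptv-\scriptf$ gives $U[t]\le U[t-1]$ and $\mu[t]\ge \mu[t-1]$, establishing (\ref{e_validity}). I anticipate that everything apart from the pigeonhole step, which guarantees that the discarded extreme values must include a fault-free node whose in-range state dominates (respectively, is dominated by) the surviving faulty value, is routine bookkeeping.
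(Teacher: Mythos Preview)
Your proposal is correct and follows essentially the same route as the paper: both reduce validity to showing that every surviving value $w_j$ lies in $[\mu[t-1],U[t-1]]$, handle fault-free $j$ trivially, and for a surviving faulty $j$ argue that the $f$ discarded extremes must contain a fault-free value that sandwiches $w_j$. Your pigeonhole step (at most $f-1$ of the $f$ discarded largest can be faulty because the surviving $j$ already accounts for one of the at most $f$ faulty nodes) is just an explicit unpacking of what the paper states more tersely as ``the values from the faulty nodes that still remain are between values received from two fault-free nodes.''
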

\begin{proof}

Consider the $t$-th iteration, and any fault-free node $i\in\scriptv-\scriptf$.
Consider two cases:
\begin{itemize}
\item
$f=0$: In (\ref{e_Z}), note that $v_i[t]$ is computed using
states from the previous iteration at node $i$ and other nodes.
By definition of $\mu[t-1]$ and $U[t-1]$, $v_j[t-1]\in [\mu[t-1],U[t-1]]$
for all fault-free nodes $j\in\scriptv-\scriptf$.
Thus, in this case, all the values used in computing $v_i[t]$ are in the
range $[\mu[t-1],U[t-1]]$.
Since $v_i[t]$ is computed as a 
weighted average of these values, $v_i[t]$ is also within
$[\mu[t-1],U[t-1]]$.
\item $f>0$: By Corollary~\ref{cor:2f+1}, $|N_i^-|\geq 2f+1$, and therefore,
$|r_i[t]| \geq 2f+1$.
When computing set $N_i^*[t]$, the largest $f$ and smallest $f$ values
from $r_i[t]$ are eliminated. Since at most $f$ nodes are faulty,
it follows that, either (i) the values received from the faulty
nodes are all eliminated, or (ii) the values from the
faulty nodes that still remain are between values
received from two fault-free nodes. Thus, the remaining values in $r_i[t]$ ($v_j[t-1],~\forall j \in N_i^*[t]$) are
all in the range $[\mu[t-1],U[t-1]]$. Also, $v_i[t-1]$ is 
in $[\mu[t-1],U[t-1]]$, as per the definition of $\mu[t-1]$ and $U[t-1]$.
Thus $v_i[t]$ is computed as a 
weighted average of values in $[\mu[t-1],U[t-1]]$, and, therefore,
it will also be in $[\mu[t-1],U[t-1]]$.
\end{itemize}
Since $\forall i\in\scriptv-\scriptf$, $v_i[t]\in [\mu[t-1],U[t-1]]$,
the validity condition (\ref{e_validity}) is satisfied.
\end{proof}

\dividerline

\begin{lemma}
\label{lemma:psi}
Consider node $i\in\scriptv-\scriptf$.
Let $\psi\leq \mu[t-1]$. Then, for $j\in \{i\}\cup N_i^*[t]$,
\[
v_i[t] - \psi  \geq  a_i ~ (w_j - \psi)
\]
Specifically, for fault-free $j\in \{i\}\cup N_i^*[t]$,
\[
v_i[t] - \psi  \geq  a_i ~ (v_j[t-1] - \psi)
\]
\end{lemma}
\begin{proof}
In (\ref{e_Z}), for each $j\in N_i^*[t]$, consider two cases:
\begin{itemize}
\item Either $j=i$ or  $j\in N_i^*[t]\cap (\scriptv-\scriptf)$: Thus, $j$ is fault-free.
In this case, $w_j=v_j[t-1]$. Therefore,
$\mu[t-1] \leq w_j\leq U[t-1]$.
\item $j$ is faulty: In this case, $f$ must be non-zero (otherwise,
all nodes are fault-free).  
From Corollary~\ref{cor:2f+1}, $|N_i^-|\geq 2f+1$.
Then it follows that, in step 2 of Algorithm 1, the smallest $f$
values in $r_i[t]$ contain the state of at least one fault-free node,
say $k$.
This implies that $v_k[t-1] \leq w_j$.
This, in turn, implies that
$\mu[t-1] \leq w_j.$
\end{itemize}
Thus, for all $j\in \{i\}\cup N_i^*[t]$, we have $\mu[t-1] \leq w_j$.
Therefore,
\begin{eqnarray}
w_j-\psi\geq 0 \mbox{\normalfont~for all~} j\in\{i\} \cup N_i^*[t]
\label{e_algo_1}
\end{eqnarray}
Since weights in Equation~\ref{e_Z} add to 1, we can re-write that equation
as,
\begin{eqnarray}
v_i[t] - \psi &=& \sum_{j\in\{i\}\cup N_i^*[t]} a_i \, (w_j-\psi) \\
\nonumber
&\geq& a_i\, (w_j-\psi), ~~\forall j\in \{i\}\cup N_i^*[t]  ~~~~~\mbox{\normalfont from (\ref{e_algo_1})}
\end{eqnarray}
For non-faulty $j\in \{i\}\cup N_i^*[t]$, $w_j=v_j[t-1]$, therefore,
\begin{eqnarray}
v_i[t] -\psi &\geq & a_i\, (v_j[t-1]-\psi)
\end{eqnarray}
\end{proof}

\dividerline

\begin{lemma}
\label{lemma:Psi}
Consider node $i\in\scriptv-\scriptf$.
Let $\Psi\geq U[t-1]$. Then, for $j\in \{i\}\cup N_i^*[t]$,
\[
\Psi - v_i[t] \geq  a_i ~ (\Psi - w_j)
\]
Specifically, for fault-free $j\in \{i\}\cup N_i^*[t]$,
\[
\Psi - v_i[t] \geq  a_i ~ (\Psi - v_j[t-1])
\]
\end{lemma}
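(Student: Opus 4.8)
The plan is to mirror the proof of Lemma~\ref{lemma:psi}, replacing the lower bound $\mu[t-1]$ by the upper bound $U[t-1]$ throughout. Where the earlier proof shows that every surviving value lies \emph{above} $\mu[t-1]$, here I would show that every surviving value lies \emph{below} $U[t-1]$, and then exploit the fact that the weights in~(\ref{e_Z}) sum to $1$.

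First I would establish that $w_j \leq U[t-1]$ for every $j \in \{i\} \cup N_i^*[t]$, splitting into the same two cases as before. If $j=i$ or $j$ is a fault-free node in $N_i^*[t]$, then $w_j = v_j[t-1]$, so by the definition of $U[t-1]$ we have $w_j \leq U[t-1]$ immediately. If $j$ is faulty, then $f > 0$, so Corollary~\ref{cor:2f+1} gives $|N_i^-| \geq 2f+1$; consequently the \emph{largest} $f$ values discarded in step~3 of Algorithm~1 must include the state of at least one fault-free node $k$, since at most $f$ nodes are faulty. Because $w_j$ survived the trimming while $v_k[t-1]$ was discarded as one of the largest values, $w_j \leq v_k[t-1] \leq U[t-1]$.

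Having shown $w_j \leq U[t-1] \leq \Psi$ for all $j \in \{i\} \cup N_i^*[t]$, it follows that $\Psi - w_j \geq 0$ for every such $j$. Since the weights $a_i$ on the terms in~(\ref{e_Z}) sum to $1$, I can write $\Psi - v_i[t] = \sum_{j \in \{i\}\cup N_i^*[t]} a_i (\Psi - w_j)$. Every summand is nonnegative, so the full sum is at least any single summand, which gives $\Psi - v_i[t] \geq a_i (\Psi - w_j)$. For fault-free $j$, substituting $w_j = v_j[t-1]$ then yields the second displayed inequality.

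The only place requiring care is the faulty-node case: I must confirm that whenever a faulty value $w_j$ is \emph{not} trimmed, at least one genuinely fault-free value sits among the discarded top $f$ values, so that $w_j$ is bounded above by a fault-free state. This is the exact dual of the sandwiching argument applied to the bottom $f$ values in Lemma~\ref{lemma:psi}, and it again relies on having at least $2f+1$ incoming values, so that discarding $2f$ of them cannot eliminate all fault-free states. Beyond this symmetric observation, the argument is routine.
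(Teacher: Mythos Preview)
Your proposal is correct and follows essentially the same approach as the paper's own proof in Appendix~\ref{append:Psi}: both establish $w_j \leq U[t-1]$ for all $j \in \{i\}\cup N_i^*[t]$ via the same two-case split (fault-free $j$ directly, faulty $j$ via a fault-free value trapped among the discarded largest $f$), then use the fact that the weights sum to~$1$ to rewrite $\Psi - v_i[t]$ as a sum of nonnegative terms and drop all but one. The arguments are symmetric duals of Lemma~\ref{lemma:psi}, exactly as you describe.
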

The proof of Lemma~\ref{lemma:Psi} is similar to that of
Lemma~\ref{lemma:psi}. The proof is presented in Appendix \ref{append:Psi}.

\dividerline

The lemma below uses parameter $\alpha$ defined in (\ref{e_alpha}).

\begin{lemma}
\label{lemma:bounded_value}
At time $s$ (i.e., at the end of the $s$-th iteration), suppose that
the fault-free nodes in $\scriptv-\scriptf$ can be partitioned into
non-empty sets
$R,L$ such that (i) $R$ propagates to $L$ in $l$ steps,
and (ii) the states of nodes
in $R$ are confined to an interval of length $\leq \frac{U[s]-\mu[s]}{2}$.
Then, 
\begin{eqnarray}
U[s+l]-\mu[s+l] \leq \left(1-\frac{\alpha^l}{2}\right)(U[s] - \mu[s])
\label{e:convergence:1}
\end{eqnarray}
\end{lemma}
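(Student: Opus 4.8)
The plan is to run two symmetric ``one-sided'' propagation arguments on the same execution of Algorithm~1: one that pushes the global upper bound $U[s+l]$ down and one that pushes the global lower bound $\mu[s+l]$ up. Write $U=U[s]$ and $\mu=\mu[s]$, and let $[a,b]$ be an interval of length $b-a\le (U-\mu)/2$ containing the states of all nodes in $R$ at time $s$; since $R\subseteq\scriptv-\scriptf$, we have $\mu\le a\le b\le U$. Let $R=R_0,R_1,\dots,R_l$ and $L=L_0,L_1,\dots,L_l=\Phi$ be the propagating sequences of Definition~\ref{def:absorb_sequence} witnessing that $R$ propagates to $L$ in $l$ steps. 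Each $R_\tau\subseteq R\cup L=\scriptv-\scriptf$, so every node appearing below is fault-free, and $R_l=\scriptv-\scriptf$.

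For the upper bound I would prove, by induction on $\tau$, that at time $s+\tau$ every node $i\in R_\tau$ satisfies $v_i[s+\tau]\le U-\alpha^\tau(U-b)$. The base case $\tau=0$ is immediate from $v_i[s]\le b$. For the inductive step fix $t=s+\tau+1$, and note $\Psi:=U\ge U[t-1]$ by validity. If $i\in R_\tau$, apply the general form of Lemma~\ref{lemma:Psi} with $j=i$ (node $i$ always retains its own value): $U-v_i[t]\ge a_i\,(U-v_i[t-1])\ge \alpha\,\alpha^\tau(U-b)$. If instead $i\in in(R_\tau\Rightarrow L_\tau)$, then $|N_i^-\cap R_\tau|\ge f+1$, and all these incoming neighbors are fault-free with states $\le U-\alpha^\tau(U-b)$ at time $t-1$; hence at least $f+1$ of the values received by $i$ are $\le U-\alpha^\tau(U-b)$. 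After discarding the $f$ smallest and $f$ largest values, the smallest surviving value (the $(f+1)$-st smallest received value) is therefore still $\le U-\alpha^\tau(U-b)$, so some $j\in N_i^*[t]$ has $w_j\le U-\alpha^\tau(U-b)$; Lemma~\ref{lemma:Psi} applied to this $j$ gives $U-v_i[t]\ge a_i\,(U-w_j)\ge \alpha\,\alpha^\tau(U-b)$. In either case $v_i[t]\le U-\alpha^{\tau+1}(U-b)$, completing the induction. Taking $\tau=l$ and using $R_l=\scriptv-\scriptf$ yields $U[s+l]\le U-\alpha^l(U-b)$.

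The symmetric induction, using Lemma~\ref{lemma:psi} with $\psi:=\mu\le \mu[t-1]$ and the fact that the states of $R$ are $\ge a$, shows $v_i[s+\tau]\ge \mu+\alpha^\tau(a-\mu)$ for all $i\in R_\tau$; here for a newly absorbed node one keeps the $(f+1)$-st largest received value, which survives the trim and is $\ge \mu+\alpha^\tau(a-\mu)$. At $\tau=l$ this gives $\mu[s+l]\ge \mu+\alpha^l(a-\mu)$. Subtracting the two bounds,
\[
U[s+l]-\mu[s+l]\le (U-\mu)-\alpha^l\big((U-b)+(a-\mu)\big),
\]
and since $(U-b)+(a-\mu)=(U-\mu)-(b-a)\ge (U-\mu)/2$ by hypothesis~(ii), the right-hand side is at most $\left(1-\frac{\alpha^l}{2}\right)(U-\mu)$, which is exactly~(\ref{e:convergence:1}).

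The step I expect to be the main obstacle is the inductive propagation across a newly absorbed node $i\in in(R_\tau\Rightarrow L_\tau)$: one must guarantee that, despite up to $f$ Byzantine values among $i$'s inputs and the trimming of the $f$ extreme values on each side, a value respecting the current bound genuinely survives into $N_i^*[t]$, so the bound advances with only a multiplicative loss of $\alpha$. This is precisely where the definition of $in(\cdot\Rightarrow\cdot)$ (the $\ge f+1$ incoming edges) and the symmetric $f$-from-each-end structure of the trim are used, together with the general (not merely fault-free) form of Lemmas~\ref{lemma:psi} and~\ref{lemma:Psi}, which lets the bound be applied even when the surviving value originates at a faulty node.
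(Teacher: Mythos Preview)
Your proof is correct and follows essentially the same route as the paper's: both define the propagating sequences $R_0,\dots,R_l$, prove by induction on $\tau$ a one-sided bound on $v_i[s+\tau]$ for $i\in R_\tau$ (the paper's Claims~\ref{claim:1}--\ref{claim:3}, your two symmetric inductions with $m=a$, $M=b$), and then subtract the resulting bounds on $U[s+l]$ and $\mu[s+l]$. The only cosmetic difference is in the inductive step for a newly absorbed node: the paper splits into two cases (some $R_\tau$-value survives the trim vs.\ all $\ge f+1$ such values are trimmed), whereas you argue directly via the order statistic that the $(f{+}1)$-st smallest received value both survives and respects the bound; the two arguments are equivalent.
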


\begin{proof}
Since $R$ propagates to $L$, as 
per Definition~\ref{def:absorb_sequence},
there exist sequences of sets
$R_0,R_1,\cdots,R_l$ and $L_0,L_1,\cdots,L_l$, where
\begin{itemize}
\item $R_0=R$, $L_0=L$, $L_l=\Phi$, for $0\leq \tau<l$, $L_\tau \neq \Phi$, and
\item for $0\leq \tau\leq l-1$,
\begin{list}{}{}
\item[*] $R_\tau\Rightarrow L_\tau$,
\item[*] $R_{\tau+1} = R_\tau\cup in(R_\tau\Rightarrow L_\tau)$, and
\item[*] $L_{\tau+1} = L_\tau - in(R_\tau\Rightarrow L_\tau)$
\end{list}
\end{itemize}
Let us define the following bounds on the states of the nodes
in $R$ at the end of the $s$-th iteration:
\begin{eqnarray}
M & = & max_{j\in R}~ v_j[s] \\ \label{e_M}
m & = & min_{j\in R}~ v_j[s] \label{e_m}
\end{eqnarray}
By the assumption in the statement of Lemma~\ref{lemma:bounded_value},
\begin{eqnarray}
M-m\leq \frac{U[s]-\mu[s]}{2} \label{e_M_m}
\end{eqnarray}
Also, $M\leq U[s]$ and $m\geq \mu[s]$.
Therefore, $U[s]-M\geq 0$ and $m-\mu[s]\geq 0$.

The remaining proof of Lemma~\ref{lemma:bounded_value} relies
on derivation of the three intermediate claims below. 

\shortdividerline

\begin{claim}
\label{claim:1}
For $0\leq \tau\leq l$, for each node $i\in R_\tau$,
\begin{eqnarray}
v_i[s+\tau] - \mu[s] \geq  \alpha^{\tau}(m-\mu[s])
\label{e_ind_1}
\end{eqnarray}
\end{claim}
\noindent{\em Proof of Claim \ref{claim:1}:}
The proof is by induction.

{\em Induction basis:}
For some $\tau$, $0\leq \tau< l$, for each node $i\in R_\tau$,
(\ref{e_ind_1}) holds.
By definition of $m$, the induction basis holds true for $\tau=0$.

\noindent{\em Induction:}
Assume that the induction basis holds true for some $\tau$, $0\leq \tau<l$.
Consider $R_{\tau+1}$.
Observe that $R_\tau$ and $R_{\tau+1}-R_\tau$ form a partition of $R_{\tau+1}$;
let us consider each of these sets separately.
\begin{itemize}
\item Set $R_\tau$: By assumption, for each $i\in R_\tau$, (\ref{e_ind_1})
holds true.
By validity of Algorithm 1, $\mu[s] \leq \mu[s+\tau]$.
Therefore, setting $\psi=\mu[s]$ in Lemma~\ref{lemma:psi},
we get,
\begin{eqnarray*}
v_i[s+\tau+1] - \mu[s] & \geq 
& a_i~(v_i[s+\tau] - \mu[s]) \\
& \geq & a_i~ \alpha^{\tau}(m-\mu[s]) ~~~~~~~~ \mbox{due to (\ref{e_ind_1})} \\
& \geq & \alpha^{\tau+1}(m-\mu[s])  ~~~~~~~~~~ \mbox{due to (\ref{e_alpha})}
\end{eqnarray*}

\item Set $R_{\tau+1}-R_\tau$: Consider a node $i\in R_{\tau+1}-R_\tau$. By definition
of $R_{\tau+1}$, we have that $i\in in(R_\tau\Rightarrow L_\tau)$.
Thus,
\[ |N_i^- \cap R_\tau| \geq f+1 \] 
In Algorithm 1, $2f$ values ($f$ smallest and $f$ largest) received by
node $i$ are eliminated before $v_i[s+\tau+1]$ is computed at
the end of $(s+\tau+1)$-th iteration. Consider two possibilities:
\begin{itemize}
\item Value received from one of the nodes in $N_i^- \cap R_\tau$ is
{\bf not} eliminated. Suppose that this value is received from
fault-free node $p\in N_i^-\cap R_\tau$. Then, by an argument similar to the
previous case, we can set $\psi=\mu[s]$
in Lemma~\ref{lemma:psi}, to obtain,
\begin{eqnarray*}
v_i[s+\tau+1] -\mu[s] & \geq & a_i~(v_p[s+\tau]-\mu[s]) \\
& \geq & a_i~ \alpha^{\tau}(m-\mu[s]) ~~~~~~~~ \mbox{due to (\ref{e_ind_1})} \\
& \geq & \alpha^{\tau+1}(m-\mu[s])  ~~~~~~~~~~ \mbox{due to (\ref{e_alpha})}
\end{eqnarray*}

\item Values received from {\bf all} (there are at least $f+1$) nodes
in $N_i^- \cap R_\tau$ are eliminated. Note that in this case $f$ must be
non-zero (for $f=0$, no value is eliminated, as already considered in the
previous case). By Corollary~\ref{cor:2f+1}, we know that
each node must have at least $2f+1$ incoming edges. 
Since at least $f+1$ values from nodes in $N_i^- \cap R_\tau$ are
eliminated, and there are at least $2f+1$ values to choose
from, it follows that the values that are {\bf not} eliminated\footnote{At
least one value received from the nodes
in $N_i^-$ is not eliminated, since there are $2f+1$ incoming
edges, and only $2f$ values are eliminated.}
are within the interval to which the values from $N_i^- \cap R_\tau$ belong.
Thus, there exists a node $k$ (possibly faulty) from whom node $i$ receives
some value $w_k$ -- which is not eliminated -- and 
a fault-free node $p\in N_i^- \cap R_\tau$ such that 
\begin{eqnarray}
v_p[s+\tau] &\leq & w_k \label{e_wk}
\end{eqnarray}
Then by setting $\psi=\mu[s]$ in Lemma~\ref{lemma:psi} we have
\begin{eqnarray*}
v_i[s+\tau+1] -\mu[s] & \geq & a_i~(w_k -\mu[s]) \\
& \geq & a_i~(v_p[s+\tau]-\mu[s]) ~~~~~~~~ \mbox{due to (\ref{e_wk})} \\
& \geq & a_i~ \alpha^{\tau}(m-\mu[s]) ~~~~~~~~ \mbox{due to (\ref{e_ind_1})} \\
& \geq & \alpha^{\tau+1}(m-\mu[s])  ~~~~~~~~~~ \mbox{due to (\ref{e_alpha})}
\end{eqnarray*}
\end{itemize}
\end{itemize}
Thus, we have shown that for all nodes in $R_{\tau+1}$,
\[
v_i[s+\tau+1] -\mu[s] 
\geq \alpha^{\tau+1}(m-\mu[s]) 
\]
This completes the proof of Claim \ref{claim:1}.

\shortdividerline

\begin{claim}
\label{claim:2}
For each node $i\in \scriptv-\scriptf$,
\begin{eqnarray}
v_i[s+l] - \mu[s] \geq  \alpha^{l}(m-\mu[s])
\label{e_ind_2}
\end{eqnarray}
\end{claim}
\noindent{\em Proof of Claim \ref{claim:2}:}

Notice that by definition, $R_l = \scriptv-\scriptf$. Then the proof follows by setting $\tau = l$ in the above Claim \ref{claim:1}.

\comment{=========old proof========
The proof is by induction.

{\em Induction basis:} 
For some $t$, where
$\tau\leq t< l$, for all $i\in R_\tau$, 
\begin{eqnarray}
v_i[s+t] - \mu[s] \geq \alpha^{t} (m-\mu[s]). \label{e_ind_2}
\end{eqnarray}
Due to Claim \ref{claim:1} above, this induction basis holds for $t=\tau$. 

{\em Induction:} Suppose that, for some $t<l$, for all
$i\in R_\tau$, $v_i[s+t]-\mu[s] \geq \alpha^{t} (m-\mu[s])$
Then by using $\psi=\mu[s]$ in Lemma~\ref{lemma:psi} as before,
\begin{eqnarray*}
v_i[s+t+1] - \mu[s] & \geq & a_i~(v_i[s+t] - \mu[s]) \\
& \geq & a_i~ \alpha^{t}(m-\mu[s]) ~~~~~~~~ \mbox{due to (\ref{e_ind_2})} \\
& \geq & \alpha^{t+1}(m-\mu[s])  ~~~~~~~~~~ \mbox{due to (\ref{e_alpha})}
\end{eqnarray*}
This concludes the proof of Claim \ref{claim:2}.
}

\shortdividerline

By a procedure similar to the derivation of Claim \ref{claim:2} above,
we can also prove the claim below. The proof of Claim \ref{claim:3}
is presented in the Appendix for completeness.
\begin{claim}
\label{claim:3}
For each node $i\in \scriptv-\scriptf$,
\begin{eqnarray}
U[s] - v_i[s+l] \geq  \alpha^{l}(U[s]-M)
\label{e_ind_3a}
\end{eqnarray}
\end{claim}

\shortdividerline

\noindent
Now let us resume the proof of the Lemma \ref{lemma:bounded_value}.
Note that $R_l=\scriptv-\scriptf$. Thus, 
\begin{eqnarray}
U[s+l] & = & \max_{i\in\scriptv-\scriptf}~ v_i[s+l] \nonumber \\
& \leq & U[s] - \alpha^{l}(U[s]-M) \mbox{~~~~~~~~~~~by (\ref{e_ind_3a})}
\label{e_U}
\end{eqnarray}
and
\begin{eqnarray}
\mu[s+l] & = & \min_{i\in\scriptv-\scriptf}~ v_i[s+l] \nonumber \\
& \geq & \mu[s] + \alpha^{l}(m-\mu[s]) \mbox{~~~~~~~~~~~by (\ref{e_ind_2}})
\label{e_mu}
\end{eqnarray}
Subtracting (\ref{e_mu}) from (\ref{e_U}),
\begin{eqnarray}
U[s+l]-\mu[s+l] & \leq & U[s] - \alpha^{l}(U[s]-M)  - \mu[s] - \alpha^{l}(m-\mu[s]) \nonumber \\
&=& (1-\alpha^l)(U[s]-\mu[s]) + \alpha^l(M-m) \\
&\leq& (1-\alpha^l)(U[s]-\mu[s]) + \alpha^l~\frac{U[s]-\mu[s]}{2}
		\mbox{~~~~~~~~~~by (\ref{e_M_m})} \\
&\leq& (1-\frac{\alpha^l}{2})(U[s]-\mu[s]) 
\end{eqnarray}
This concludes the proof of Lemma~\ref{lemma:bounded_value}.
\end{proof}

\dividerline

\begin{theorem}
\label{thm:convergence}
Suppose that $G(\scriptv, \scripte)$ satisfies Theorem~\ref{thm:nc}.
Then Algorithm 1 satisfies the {\em convergence} condition.
\end{theorem}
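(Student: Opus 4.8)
The plan is to show that the gap $U[t]-\mu[t]$ contracts by a uniform multiplicative factor strictly less than $1$ over every block of at most $n-f-1$ consecutive iterations, and then to conclude convergence by iterating this estimate. Theorem~\ref{thm:validity} already tells us that $U[t]$ is non-increasing and $\mu[t]$ is non-decreasing, so $U[t]-\mu[t]$ is non-increasing in $t$; it therefore suffices to exhibit a sequence of times along which the gap decays geometrically, and then to interpolate using this monotonicity.

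Fix a time $s$. If $U[s]=\mu[s]$, then all fault-free nodes already agree and, by validity, the gap stays $0$ forever, so there is nothing to prove; hence assume $U[s]>\mu[s]$. The central idea is to split the fault-free nodes by value at the midpoint of the current range. Define
\[
A=\{\, i\in\scriptv-\scriptf ~:~ v_i[s]\le \mu[s]+\tfrac{U[s]-\mu[s]}{2}\,\},
\qquad
B=(\scriptv-\scriptf)-A.
\]
The node attaining $\mu[s]$ lies in $A$ and the node attaining $U[s]$ lies in $B$, so both are non-empty; moreover the states of the nodes in $A$ all lie in an interval of length $\tfrac{U[s]-\mu[s]}{2}$, and likewise the states in $B$ lie in an interval of length $\le\tfrac{U[s]-\mu[s]}{2}$. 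Since $\{A,B,\scriptf\}$ is a partition of $\scriptv$ with $A,B$ non-empty and $|\scriptf|\le f$, Lemma~\ref{lemma:must_absorb} guarantees that either $A$ propagates to $B$ or $B$ propagates to $A$, in some number $l$ of steps. In either case the propagating set has its states confined to an interval of length $\le\tfrac{U[s]-\mu[s]}{2}$, so Lemma~\ref{lemma:bounded_value} applies with that set as $R$ and the other as $L$, yielding $U[s+l]-\mu[s+l]\le\left(1-\tfrac{\alpha^{l}}{2}\right)(U[s]-\mu[s])$.

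To turn this into a uniform contraction, recall from the remark following Definition~\ref{def:absorb_sequence} that any propagation length satisfies $1\le l\le n-f-1$. Since $0<\alpha\le 1$, the quantity $1-\tfrac{\alpha^{l}}{2}$ is non-decreasing in $l$, so the factor above is at most $\beta:=1-\tfrac{\alpha^{\,n-f-1}}{2}$, and $\beta<1$ because $\alpha>0$. Now define times $s_0=0$ and $s_{k+1}=s_k+l_k$, where $l_k\in\{1,\dots,n-f-1\}$ is the propagation length produced by the construction above applied at time $s_k$ (using $U[s_k]>\mu[s_k]$; if ever $U[s_k]=\mu[s_k]$ the gap is already $0$ and stays so). The previous paragraph gives $U[s_{k+1}]-\mu[s_{k+1}]\le\beta\,(U[s_k]-\mu[s_k])$, whence by induction $U[s_k]-\mu[s_k]\le\beta^{k}\,(U[0]-\mu[0])$.

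Finally, since each $l_k\ge 1$ we have $s_k\le k(n-f-1)$, so for every $t$ there is an index $k=k(t)$ with $s_k\le t$ and $k(t)\to\infty$ as $t\to\infty$; combining this with the monotonicity of the gap from Theorem~\ref{thm:validity} gives
\[
U[t]-\mu[t]\ \le\ U[s_k]-\mu[s_k]\ \le\ \beta^{k}\,(U[0]-\mu[0]),
\]
and the right-hand side tends to $0$ as $t\to\infty$ because $\beta<1$, which is precisely the convergence condition. The main obstacle is organizational rather than conceptual: the number of iterations $l$ needed for propagation is not fixed, so one must bound it uniformly by $n-f-1$ to extract a single geometric ratio $\beta<1$, and then interleave this block structure with the per-iteration monotonicity of $U[t]-\mu[t]$ to obtain a genuine limit. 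Splitting at the midpoint so that \emph{both} candidate sets are automatically confined to half the range is the device that lets Lemma~\ref{lemma:must_absorb} be invoked without having to control in advance which direction the propagation occurs.
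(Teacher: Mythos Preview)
Your argument is correct and follows essentially the same route as the paper's: split the fault-free nodes at the midpoint of $[\mu[s],U[s]]$, invoke Lemma~\ref{lemma:must_absorb} to get propagation in one direction, apply Lemma~\ref{lemma:bounded_value} for a contraction of the gap, and iterate along the resulting subsequence of times while using validity (Theorem~\ref{thm:validity}) for monotonicity of $U[t]-\mu[t]$; your replacement of the paper's product $\prod_j\bigl(1-\tfrac{\alpha^{l_j}}{2}\bigr)$ by the single uniform factor $\beta=1-\tfrac{\alpha^{n-f-1}}{2}$ is a harmless tidying. One minor slip: the bound $s_k\le k(n-f-1)$ follows from $l_k\le n-f-1$, not from $l_k\ge 1$, though your conclusion $k(t)\to\infty$ only requires each $s_k$ to be finite, which is automatic.
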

\begin{proof}
Our goal is to prove that, given any $\epsilon>0$, there
exists $\tau$ such that
\begin{equation}
U[t]-\mu[t] \leq \epsilon ~~~\forall t\geq \tau
\end{equation}

Consider $s$-th iteration, for some $s\geq 0$.
If $U[s]-\mu[s]=0$, then the algorithm has already converged, and the proof
is complete, with $\tau=s$.

Now consider the case when $U[s]-\mu[s]>0$.
Partition $\scriptv-\scriptf$ into two subsets, $A$ and $B$, such
that, for each node $i\in A$, 
$v_i[s]\in \left[\mu[s], \frac{U[s]+\mu[s]}{2}\right)$, and
for each node $j\in B$,
$v_j[s] \in \left[\frac{U[s]+\mu[s]}{2}, U[s]\right]$.
By definition of $\mu[s]$ and $U[s]$, there exist fault-free nodes
$i$ and $j$ such that $v_i[s]=\mu[s]$ and $v_j[s]=U[s]$.
Thus, sets $A$ and $B$ are both non-empty.
By Lemma \ref{lemma:must_absorb}, one of the following two conditions
must be true:
\begin{itemize}
\item Set $A$ propagates to set $B$. Then, define $L=B$ and $R=A$.
The states of all the nodes in $R=A$ are confined within an
interval of length
$<\frac{U[s]+\mu[s]}{2} - \mu[s] \leq \frac{U[s]-\mu[s]}{2}$.

\item Set $B$ propagates to set $A$. Then, define $L=A$ and $R=B$.
In this case, states of all the nodes in $R=B$ are confined within an interval of length
$\leq U[s]-\frac{U[s]+\mu[s]}{2} \leq \frac{U[s]-\mu[s]}{2}$. 


\end{itemize}
In both cases above, we have found non-empty sets $L$ and $R$
such that (i) $L,R$ is a partition of $\scriptv-\scriptf$,
(ii) $R$ propagates to $L$, and (iii) the states in $R$ are confined
to an interval of length $\leq \frac{U[s]-\mu[s]}{2}$.
Suppose that $R$ propagates to $L$ in $l(s)$ steps, where $l(s)\geq 1$.
By Lemma~\ref{lemma:bounded_value},
\begin{eqnarray}
U[s+l(s)]-\mu[s+l(s)] \leq \left( 1-\frac{\alpha^{l(s)}}{2}\right)(U[s] - \mu[s])
\label{e_t}
\end{eqnarray}
Since $n-f-1 \geq l(s)\geq 1$ and $0<\alpha\leq 1$, $0\leq \left( 1-\frac{\alpha^{l(s)}}{2}\right)<1$.

Let us define the following sequence of iteration indices\footnote{Without loss of generality, we assume that $U[\tau_i]-\mu[\tau_i] > 0$. Otherwise, the statement is trivially true due to the validity shown in Theorem \ref{thm:validity}.}:
\begin{itemize}
\item $\tau_0 = 0$,
\item for $i>0$, $\tau_i = \tau_{i-1} + l(\tau_{i-1})$, where $l(s)$ for any given $s$ was defined above.
\end{itemize}

By repeated application of the argument leading to (\ref{e_t}), we can prove
that, for $i\geq 0$,

\begin{eqnarray}
U[\tau_i]-\mu[\tau_i] \leq \left( \Pi_{j=1}^i\left( 1-\frac{\alpha^{\tau_i-\tau_{i-1}}}{2}\right)\right)~(U[0] - \mu[0])
\end{eqnarray}

For a given $\epsilon$,
by choosing a large enough $i$, we can obtain
\[
\left(\Pi_{j=1}^i\left( 1-\frac{\alpha^{\tau_i-\tau_{i-1}}}{2}\right)\right)~(U[0] - \mu[0]) \leq \epsilon
\]
and, therefore,
\begin{eqnarray}
U[\tau_i]-\mu[\tau_i] \leq  \epsilon
\end{eqnarray}
For $t\geq \tau_i$, by validity of Algorithm 1, it follows that
\[
U[t]-\mu[t] \leq
U[\tau_i]-\mu[\tau_i] \leq  \epsilon
\]
This concludes the proof.
\end{proof}

\section{Applications}
\label{s_application}

In this section, we use the results in the previous sections to examine whether
iterative approximate Byzantine consensus algorithm exists in some specific networks.

\subsection{Core Network}

Graph $G(\scriptv,\scripte)$ is said to be undirected iff $(i,j)\in\scripte$ implies that
$(j,i)\in\scripte$.
We now define a class of undirected graphs, named  {\em core network}.
\begin{definition}
\noindent{\em Core Network:}
A graph $G(\scriptv, \scripte)$ consisting of $n>3f$ nodes is said to be
a {\em core network} if the following properties are satisfied:
(i) it includes a clique formed by nodes in $K\subseteq \scriptv$,
such that $|K|= 2f+1$, as a subgraph
and, (ii) each node $i\not\in K$ has links to all the nodes in $K$.
That is,
(i) $\forall ~i, j \in K, (i, j) \in \scripte$ and $(j,i)\in\scripte$, and (ii) $\forall~v \in \scriptv - K$, and $\forall~u \in K$, $ (v, u) \in \scripte$ and $(u,v)\in\scripte$.
\end{definition}

It is easy to show that a core network satisfies the necessary condition in Theorem~\ref{thm:nc}.
Therefore, Algorithm 1 achieves approximate consensus in such network. We conjecture that
a core network with $n=3f+1$ has the smallest number of edges possible in any undirected
network of $3f+1$ nodes for which an iterative approximate consensus algorithm exists.

\subsection{Hypercube}

If the conensus algorithms are {\bf not} required to satisfy the consatraints imposed
on iterative algorithms in this paper, then
it is known that conensus can be achieved in undirected graphs with connectivity $>2f$ \cite{AA_nancy}.
However, connectivity of $2f+1$ by itself is not sufficient for iterative algorithms of interest
in this paper. For example, a $d$-dimensional binary hypercube is an undirected graph
consisting of $2^d$ nodes and has connectivity $d$. However, a cut of this graph that removes
edges along any one dimension fails to satisfy the necessary condition in
Theorem~\ref{thm:nc}, since each node has exactly one edge that belongs to the cut.
Thus, each node in one part of the partition is neighbor to fewer than $f+1$ nodes in
the other part, for any $f\geq 1$.
Figure~\ref{f_cube} illustrates such a partition for a 3-dimensional binary cube.
Each undirected link $(i,j)$ in the figure represents two directed
edges, namely, $(i,j)$ and $(j,i)$.

\begin{figure}
\centering
\includegraphics[width=0.6\textwidth]{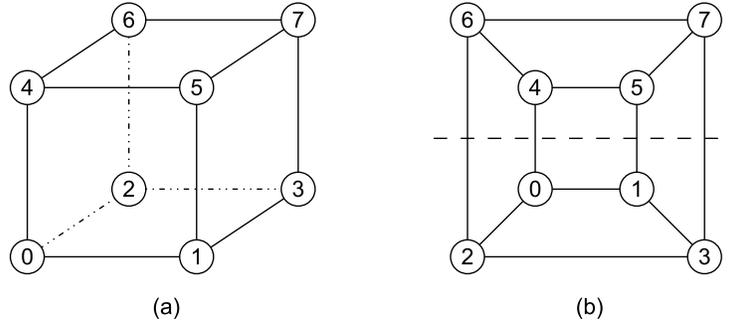}
\caption{(a) 3-dimensional cube. (b) 3-dimensional cube redrawn to illustrate
the partitions \{0,1,2,3\} and \{4,5,6,7\}.}
\label{f_cube}
\end{figure}

\subsection{Chord Network}


A {\em chord} network is a directed graph defined as follows. This network is similar but not identical to the network in \cite{chord}.

\begin{definition}
\noindent{\em Chord network:}
A graph $G(\scriptv, \scripte)$ consisting of $n>3f$ nodes is said to be a chord network 
if (i) $\scriptv=\{0,1,\cdots,n-1\}$, (ii) $\forall i\in\scriptv$, $(i,j)\in\scripte$
iff $j=i+k\mod n$, where $1\leq k\leq 2f+1$. That is,
for each node $i\in\scriptv$, $(i, (i+1)\mod n), (i, (i+2) \mod n), ..., (i, (i+2f+1) \mod n) \in E$.
\end{definition}
The case when $f = 1$ and $n = 4$ results in a fully connected graph,
which trivially satisfies Theorem~\ref{thm:nc}.
The following results can be shown for two other specific chord networks:
\begin{itemize}
\item When $f = 2$ and $n = 7$, the chord network does not satisfy Theorem~\ref{thm:nc}.

 Let $\scriptv = \{0, 1, ..., 6\}$. Then the counter example is as follows:

Let node $5, 6$ be faulty. Then consider $L = \{0, 2\}$ and $R = \{1, 3, 4\}$. This partition fails Theorem~\ref{thm:nc}. Obviously, $L \not\Rightarrow R$, since $|L| < f+1 = 3$. However, $R \not\Rightarrow L$, since $N_0^- \cap R = \{3, 4\}$ and $N_2^- \cap R = \{1, 4\}$, which have size less than $3$. Notice that this example also illustrate that connectivity of $2f+1$ by itself is not sufficient in an directed and symmetric network.

\item The Chord network with $f = 1$ and $n = 5$ satisfies Theorem~\ref{thm:nc}.
\end{itemize}

\section{Asynchronous Networks}

The above results can be generalized to derive necessary and sufficient condition for (totally) asynchronous network under which the algorithm defined in \cite{AA_Dolev_1986} would work correctly. In essence, the primary change is that the requirement of $\geq f+1$ incoming links in the definition of $\Rightarrow$ needs to be replaced by $\geq 2f+1$ links. This implies that $|N_i^-|\geq 3f+1$ for each node $i$ when $f>0$ and $n$, number of nodes, must exceed $5f$. The above results can also be generalized to the (partially) asynchronous model defined in Section 7 of \cite{AA_convergence_markov} that allows for message delay of up to $B$ iterations.

Full details of the above generalizations will be presented in a future technical report. 

\section{Conclusion}

This paper proves a necessary and sufficient condition for the existence of
iterative approximate consensus algorithm in arbitrary directed graphs. As a special case,
our results can also be applied to undirected graphs. We also use the necessary
and sufficient condition to determine whether such iterative algorithms
exist for certain specific graphs.

In our ongoing research, we are exploring extensions of the above results
by relaxing some of the assumptions made in this work. 


\begin{thebibliography}{10}

\bibitem{AA_PFCN}
A.H. Azadmanesh and H.~Bajwa.
\newblock Global convergence in partially fully connected networks (pfcn) with
  limited relays.
\newblock In {\em Industrial Electronics Society, 2001. IECON '01. The 27th
  Annual Conference of the IEEE}, volume~3, pages 2022 --2025 vol.3, 2001.

\bibitem{AA_async_PCN}
M.~H. Azadmanesh and R.M. Kieckhafer.
\newblock Asynchronous approximate agreement in partially connected networks.
\newblock {\em International Journal of Parallel and Distributed Systems and
  Networks}, 5(1):26--34, 2002. 
\newblock http://ahvaz.unomaha.edu/azad/pubs/ijpdsn.asyncpart.pdf

\bibitem{gradecast_benor_2010}
Michael Ben-Or, Danny Dolev, and Ezra~N. Hoch.
\newblock Simple gradecast based algorithms.
\newblock {\em CoRR}, abs/1007.1049, 2010.

\bibitem{AA_convergence_markov}
Dimitri~P. Bertsekas and John~N. Tsitsiklis.
\newblock {\em Parallel and Distributed Computation: Numerical Methods}.
\newblock Optimization and Neural Computation Series. Athena Scientific, 1997.

\bibitem{AA_Dolev_1986}
Danny Dolev, Nancy~A. Lynch, Shlomit~S. Pinter, Eugene~W. Stark, and William~E.
  Weihl.
\newblock Reaching approximate agreement in the presence of faults.
\newblock {\em J. ACM}, 33:499--516, May 1986.

\bibitem{AA_Fekete_aoptimal}
A.~D. Fekete.
\newblock Asymptotically optimal algorithms for approximate agreement.
\newblock In {\em Proceedings of the fifth annual ACM symposium on Principles
  of distributed computing}, PODC '86, pages 73--87, New York, NY, USA, 1986.
  ACM.

\bibitem{impossible_proof_lynch}
Michael~J. Fischer, Nancy~A. Lynch, and Michael Merritt.
\newblock Easy impossibility proofs for distributed consensus problems.
\newblock In {\em Proceedings of the fourth annual ACM symposium on Principles
  of distributed computing}, PODC '85, pages 59--70, New York, NY, USA, 1985.
  ACM.

\bibitem{AA_PCN_Local}
R.~M. Kieckhafer and M.~H. Azadmanesh.
\newblock Low cost approximate agreement in partially connected networks.
\newblock {\em Journal of Computing and Information}, 3(1):53--85, 1993.

\bibitem{psl_BG_1982}
Leslie Lamport, Robert Shostak, and Marshall Pease.
\newblock The Byzantine generals problem.
\newblock {\em ACM Trans. on Programming Languages and Systems}, 1982.

\bibitem{leblanc1}
Heath LeBlanc and Xenofon Koutsoukos. ``Consensus in Networked Multi-Agent Systems with Adversaries,'' 14th international conference on Hybrid systems: computation and control (HSCC), 2011.

\bibitem{leblanc2}
(via personal communication with H. LeBlanc, January 19, 2012)
Heath J. LeBlanc and Xenofon Koutsoukos. ``Low Complexity Resilient Consensus in Networked Multi-Agent Systems with Adversaries,''
to appear at 15th international conference on Hybrid systems: computation and control (HSCC), 2012.

\bibitem{AA_nancy}
Nancy~A. Lynch.
\newblock {\em Distributed Algorithms}.
\newblock Morgan Kaufmann, 1996.

\bibitem{DA_PCN_Markov}
Satish~M. Srinivasan and Azad~H. Azadmanesh.
\newblock Data aggregation in partially connected networks.
\newblock {\em Comput. Commun.}, 32:594--601, March 2009.

\bibitem{DA_PCN_Markov_conf}
Satish~M. Srinivasana and Azad~H. Azadmanesh.
\newblock Exploiting markov chains to reach approximate agreement in partially
  connected networks.
\newblock In {\em Symposium on Performance Evaluation of Computer and
  Telecommunication Systems}, 2007.

\bibitem{chord}
I.~Stoica, R.~Morris, D.~Liben-Nowell, D.R. Karger, M.F. Kaashoek, F.~Dabek,
  and H.~Balakrishnan.
\newblock Chord: a scalable peer-to-peer lookup protocol for internet
  applications.
\newblock {\em Networking, IEEE/ACM Transactions on}, 11(1):17 -- 32, Feb. 2003.

\bibitem{DFC_Byzantine}
S.~Sundaram and C.N. Hadjicostis.
\newblock Distributed function calculation via linear iterative strategies in
  the presence of malicious agents.
\newblock {\em Automatic Control, IEEE Transactions on}, 56(7):1495 --1508,
  July 2011.

\bibitem{diss_Sundaram}
(via personal communication with S. Sundaram,
January 18, 2012)
Heath LeBlanc, Haotian Zhang, Shreyas Sundaram, and Xenofon Koutsoukos.
``Consensus of Multi-Agent Networks in the Presence of Adversaries Using Only Local Information,'' submitted to HiCoNs 2012.

\bibitem{IBA_broadcast_Sundaram}
Haotian Zhang and Shreyas Sundaram.
\newblock Robustness of Information Diffusion Algorithms to Locally Bounded Adversaries.
\newblock In {\em CoRR}, volume~abs/1110.3843, 2011.
\newblock http://arxiv.org/abs/1110.3843

\end{thebibliography}

\appendix

\section{Proof of Claim \ref{claim:3}}
\label{s_appendix}

In this section, we will prove the claim \ref{claim:3} in Section \ref{s_sufficiency}:

For each node $i\in \scriptv-\scriptf$,

\begin{eqnarray*}
U[s] - v_i[s+l] \geq  \alpha^{l}(U[s]-M)
\end{eqnarray*}

\begin{proof}

Similar to the proof of claim \ref{claim:2}, we will first prove the following claim:

\begin{claim}
\label{claim:3-1}
For $0\leq \tau\leq l$, for each node $i\in R_\tau$,
\begin{eqnarray}
U[s] - v_i[s+\tau] \geq  \alpha^{\tau}(U[s]-M)
\label{e_ind_3_1}
\end{eqnarray}
\end{claim}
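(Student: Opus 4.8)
The plan is to prove Claim~\ref{claim:3-1} by induction on $\tau$, running the same argument as in Claim~\ref{claim:1} but ``upside down'': I replace Lemma~\ref{lemma:psi} by its mirror image Lemma~\ref{lemma:Psi} and use the fixed upper reference point $\Psi=U[s]$ in place of $\psi=\mu[s]$. Since Algorithm~1 satisfies validity (Theorem~\ref{thm:validity}), we have $U[s+\tau]\le U[s]$ for every $\tau\ge 0$, so the hypothesis $\Psi\ge U[(s+\tau+1)-1]$ required to invoke Lemma~\ref{lemma:Psi} at iteration $s+\tau+1$ always holds; this is what lets me keep $\Psi$ pinned at $U[s]$ throughout.

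For the base case $\tau=0$ I use $R_0=R$ and the definition $M=\max_{j\in R}v_j[s]$, so that every $i\in R_0$ has $v_i[s]\le M$ and hence $U[s]-v_i[s]\ge U[s]-M=\alpha^0(U[s]-M)$. For the inductive step I split $R_{\tau+1}$ into $R_\tau$ and $R_{\tau+1}-R_\tau$, exactly as in Claim~\ref{claim:1}. For $i\in R_\tau$, Lemma~\ref{lemma:Psi} with $\Psi=U[s]$ and $j=i$ gives $U[s]-v_i[s+\tau+1]\ge a_i(U[s]-v_i[s+\tau])$, and the inductive hypothesis together with $a_i\ge\alpha$ (from (\ref{e_alpha})) produces $\alpha^{\tau+1}(U[s]-M)$. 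For $i\in R_{\tau+1}-R_\tau$ I have $i\in in(R_\tau\Rightarrow L_\tau)$, hence $|N_i^-\cap R_\tau|\ge f+1$, and I again ask whether some value received from $R_\tau$ survives the removal of the extreme $2f$ values.

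The one step that needs genuine care --- and the only place the argument is not a verbatim transcription of Claim~\ref{claim:1} --- is the sub-case where \emph{all} of the $\ge f+1$ values from $N_i^-\cap R_\tau$ are eliminated (which forces $f>0$). By Corollary~\ref{cor:2f+1} node $i$ has at least $2f+1$ incoming edges while only $2f$ are trimmed, so at least one value $w_k$ survives; and since $f+1$ values from $R_\tau$ cannot all lie among the $f$ \emph{largest} eliminated values, at least one fault-free node $p\in N_i^-\cap R_\tau$ has its value among those $f$ largest, giving $w_k\le v_p[s+\tau]$ for every surviving $w_k$. This is the mirror of Claim~\ref{claim:1}, where one instead selected a node among the $f$ smallest to obtain $v_p[s+\tau]\le w_k$; here I must pick from the top end so that the inequality points the right way. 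Lemma~\ref{lemma:Psi} with $\Psi=U[s]$ then yields $U[s]-v_i[s+\tau+1]\ge a_i(U[s]-w_k)\ge a_i(U[s]-v_p[s+\tau])$, and the inductive hypothesis applied to $p\in R_\tau$ with $a_i\ge\alpha$ closes the induction.

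Finally, since $R_l=\scriptv-\scriptf$, setting $\tau=l$ in Claim~\ref{claim:3-1} gives $U[s]-v_i[s+l]\ge\alpha^l(U[s]-M)$ for every fault-free node $i$, which is exactly Claim~\ref{claim:3}.
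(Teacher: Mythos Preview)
Your proof is correct and follows the paper's own argument for Claim~\ref{claim:3-1} essentially line for line: induction on $\tau$, the same split of $R_{\tau+1}$ into $R_\tau$ and $R_{\tau+1}\setminus R_\tau$, and Lemma~\ref{lemma:Psi} with $\Psi=U[s]$ in each sub-case. The only slip is a swapped word in the pigeonhole step of the ``all eliminated'' sub-case: you need that the $\ge f+1$ values from $R_\tau$ cannot all lie among the $f$ \emph{smallest} eliminated values (not the $f$ largest), which is what forces at least one of them into the top-$f$ block and gives $w_k\le v_p[s+\tau]$ for every surviving $w_k$; your conclusion is right, only the premise was mis-stated.
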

\noindent{\em Proof of Claim \ref{claim:3-1}:}
The proof is by induction.

{\em Induction basis:}
For some $\tau$, $0\leq \tau< l$, for each node $i\in R_\tau$,
(\ref{e_ind_3_1}) holds.
By definition of $M$, the induction basis holds true for $\tau=0$.

\noindent{\em Induction:}
Assume that the induction basis holds true for some $\tau$, $0\leq \tau<l$.
Consider $R_{\tau+1}$.
Observe that $R_\tau$ and $R_{\tau+1}-R_\tau$ form a partition of $R_{\tau+1}$;
let us consider each of these sets separately.
\begin{itemize}
\item Set $R_\tau$: By assumption, for each $i\in R_\tau$, (\ref{e_ind_3_1})
holds true.
By validity of Algorithm 1, $U[s] \geq U[s+\tau]$.
Therefore, setting $\Psi=U[s]$ in Lemma~\ref{lemma:Psi},
we get,
\begin{eqnarray*}
U[s] - v_i[s+\tau+1] & \geq 
& a_i~(U[s] - v_i[s+\tau]) \\
& \geq & a_i~ \alpha^{\tau}(U[s]-M) ~~~~~~~~ \mbox{due to (\ref{e_ind_3_1})} \\
& \geq & \alpha^{\tau+1}(U[s]-M)  ~~~~~~~~~~ \mbox{due to (\ref{e_alpha})}
\end{eqnarray*}

\item Set $R_{\tau+1}-R_\tau$: Consider a node $i\in R_{\tau+1}-R_\tau$. By definition
of $R_{\tau+1}$, we have that $i\in in(R_\tau\Rightarrow L_\tau)$.
Thus,
\[ |N_i^- \cap R_\tau| \geq f+1 \] 
In Algorithm 1, $2f$ values ($f$ smallest and $f$ largest) received by
node $i$ are eliminated before $v_i[s+\tau+1]$ is computed at
the end of $(s+\tau+1)$-th iteration. Consider two possibilities:
\begin{itemize}
\item Value received from one of the nodes in $N_i^- \cap R_\tau$ is
{\bf not} eliminated. Suppose that this value is received from
fault-free node $p\in N_i^-\cap R_\tau$. Then, by an argument similar to the
previous case, we can set $\Psi=U[s]$
in Lemma~\ref{lemma:Psi}, to obtain,

\begin{eqnarray*}
U[s] - v_i[s+\tau+1] & \geq & a_i~(U[s] - v_p[s+\tau]) \\
& \geq & a_i~ \alpha^{\tau}(U[s] - M) ~~~~~~~~ \mbox{due to (\ref{e_ind_3_1})} \\
& \geq & \alpha^{\tau+1}(U[s] - M)  ~~~~~~~~~~ \mbox{due to (\ref{e_alpha})}
\end{eqnarray*}

\item Values received from {\bf all} (there are at least $f+1$) nodes
in $N_i^- \cap R_\tau$ are eliminated. Note that in this case $f$ must be
non-zero (for $f=0$, no value is eliminated, as already considered in the
previous case). By Corollary~\ref{cor:2f+1}, we know that
each node must have at least $2f+1$ incoming edges. 
Since at least $f+1$ values from nodes in $N_i^- \cap R_\tau$ are
eliminated, and there are at least $2f+1$ values to choose
from, it follows that the values that are {\bf not} eliminated
are within the interval to which the values from $N_i^- \cap R_\tau$ belong.
Thus, there exists a node $k$ (possibly faulty) from whom node $i$ receives
some value $w_k$ -- which is not eliminated -- and 
a fault-free node $p\in N_i^- \cap R_\tau$ such that 
\begin{eqnarray}
v_p[s+\tau] &\geq & w_k \label{e_wk2}
\end{eqnarray}
Then by setting $\Psi=U[s]$ in Lemma~\ref{lemma:Psi} we have

\begin{eqnarray*}
U[s] - v_i[s+\tau+1] & \geq & a_i~(U[s] - w_k) \\
& \geq & a_i~(U[s] - v_p[s+\tau]) ~~~~~~~~ \mbox{due to (\ref{e_wk2})} \\
& \geq & a_i~ \alpha^{\tau}(U[s]-M) ~~~~~~~~ \mbox{due to (\ref{e_ind_3_1})} \\
& \geq & \alpha^{\tau+1}(U[s]-M)  ~~~~~~~~~~ \mbox{due to (\ref{e_alpha})}
\end{eqnarray*}
\end{itemize}
\end{itemize}
Thus, we have shown that for all nodes in $R_{\tau+1}$,

\[
U[s] - v_i[s+\tau] \geq  \alpha^{\tau+1}(U[s]-M)
\]

This completes the proof of Claim \ref{claim:3-1}.

\shortdividerline

Now, we are able to prove Claim \ref{claim:3}.

\noindent{\em Proof of Claim \ref{claim:3}:}

Notice that by definition, $R_l = \scriptv-\scriptf$. Then the proof follows by setting $\tau = l$ in the above Claim \ref{claim:3-1}.

\end{proof}


\section{Completing the proof of Lemma~\ref{lemma:must_absorb}}
\label{app:lemma:must_absorb}

The last line in the proof of Lemma~\ref{lemma:must_absorb} claims that:

``Since $B_k\subseteq B_0 = B$, $A\subseteq A_k$, and $B_k$ propagates
to $A_k$, it should be easy to see that $B$ propagates to $A$.''

We now prove the correctness of this claim.

\begin{proof}

Recall that $A_i$ and $B_i$ form a partition of $\scriptv-F$.

Let us define $P=P_0=B_k$ and $Q=Q_0=A_k$. Thus, $P$ propagates to $Q$.
Suppose that $P_0,P_1,...P_m$ and $Q_0,Q_1,\cdots,Q_m$ are
the propagating sequences in this case, with $P_i$ and $Q_i$ forming
a partition of $P\cup Q = A_k\cup B_k=\scriptv-F$.

Let us define $R=R_0=B$ and $S=S_0=A$.
Note that $R,S$ form a partition of $A\cup B=\scriptv-F$.
Now, $P_0=B_k\subseteq B=R_0$ and $S_0=A\subseteq A_k =Q_0$.
Also, $R_0-P_0$ and $S_0$ form a partition of $Q_0$.
Figure~\ref{f_appendixB} illustrates some of the sets used in this
proof.

\begin{figure}[h]
\centering
\includegraphics[width=0.7\textwidth]{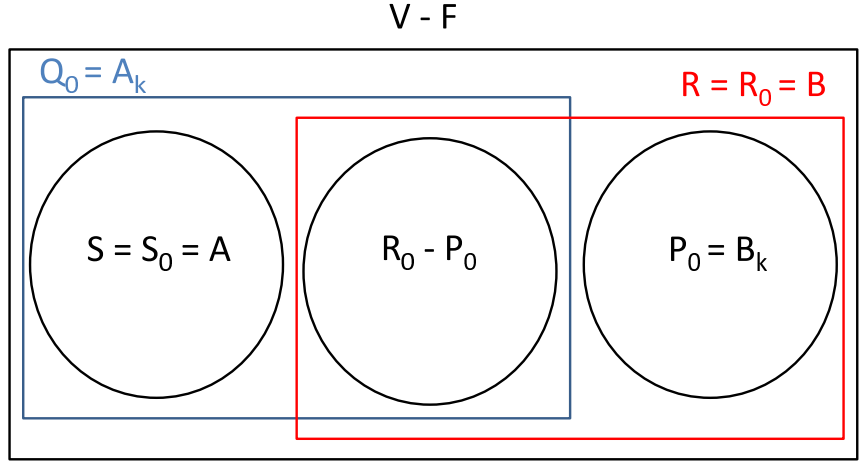}
\caption{Illustration for the proof of the last line in Lemma \ref{lemma:must_absorb}. In this figure, $R_0 = P_0 \cup (R_0 - P_0)$ and $Q_0 = S_0 \cup (R_0 - P_0)$.}
\label{f_appendixB}
\end{figure}

\begin{itemize}
\item
Define $P_1 = P_0\cup (in(P_0\Rightarrow Q_0))$, and $R_1 = V - F - P_1 = Q_0 - (in(P_0\Rightarrow Q_0))$ Also,
$R_1 = R_0\cup (in(R_0\Rightarrow S_0))$, and $S_1 = V - F - R_1 = S_0 - (in(R_0\Rightarrow S_0))$.

Since $R_0-P_0$ and $S_0$ are a partition of $Q_0$,
the nodes in $in(P_0\Rightarrow Q_0)$ belong to one of these
two sets. Note that $R_0-P_0\subseteq R_0$.
Also, $S_0 \cap in(P_0\Rightarrow Q_0) \subseteq in(R_0\Rightarrow S_0)$.
Therefore, it follows that $P_1 = P_0\cup (in(P_0\Rightarrow Q_0))
\subseteq R_0\cup (in(R_0\Rightarrow S_0)) = R_1$.

Thus, we have shown that, $P_1\subseteq R_1$. Then it follows that
$S_1\subseteq Q_1$.

\item For $0\leq i<m$, let us define $R_{i+1}=R_i\cup in(R_i\Rightarrow 
S_i)$ and $S_{i+1} = S_i - in(R_i\Rightarrow S_i)$. Then following an
argument similar to the above case, we can inductively show that,
$P_i\subseteq R_i$ and $S_i\subseteq Q_i$.
Due to the assumption on the length of the propagating
sequence above, $P_m=P\cup Q=\scriptv-F$.
Thus, there must exist $r\leq m$, such that $R_r=\scriptv-F$
and, for $i<r$, $R_i\neq \scriptv-F$.

The sequences $R_0,R_1,\cdots,R_r$ and $S_0,S_1,\cdots,S_r$ form
propagating sequences, proving that $R=B$ propagates to $S=A$. 
\end{itemize}

\end{proof}

\section{Proof of Lemma~\ref{lemma:Psi}}
\label{append:Psi}

\begin{proof}
In (\ref{e_Z}), for each $j\in N_i^*[t]$, consider two cases:
\begin{itemize}
\item Either $j=i$ or  $j\in N_i^*[t]\cap (\scriptv-\scriptf)$: Thus, $j$ is fault-free.
In this case, $w_j=v_j[t-1]$. Therefore,
$\mu[t-1] \leq w_j\leq U[t-1]$.
\item $j$ is faulty: In this case, $f$ must be non-zero (otherwise,
all nodes are fault-free).
From Corollary~\ref{cor:2f+1}, $|N_i^-|\geq 2f+1$.
Then it follows that, in step 2 of Algorithm 1, the largest $f$
values in $r_i[t]$ contain the state of at least one fault-free node,
say $k$.
This implies that $v_k[t-1] \geq w_j$.
This, in turn, implies that
$U[t-1] \geq w_j.$
\end{itemize}
Thus, for all $j\in \{i\}\cup N_i^*[t]$, we have $U[t-1] \geq w_j$.
Therefore,
\begin{eqnarray}
\Psi - w_j\geq 0 \mbox{\normalfont~for all~} j\in\{i\} \cup N_i^*[t]
\label{e_algo_11}
\end{eqnarray}
Since weights in Equation~\ref{e_Z} add to 1, we can re-write that equation
as,
\begin{eqnarray}
\Psi - v_i[t] &=& \sum_{j\in\{i\}\cup N_i^*[t]} a_i \, (\Psi-w_j) \\
\nonumber
&\geq& a_i\, (\Psi-w_j), ~~\forall j\in \{i\}\cup N_i^*[t]  ~~~~~\mbox{\normalfont from (\ref{e_algo_11})}
\end{eqnarray}
For non-faulty $j\in \{i\}\cup N_i^*[t]$, $w_j=v_j[t-1]$, therefore,
\begin{eqnarray}
\Psi-v_i[t] &\geq & a_i\, (\Psi-v_j[t-1])
\end{eqnarray}
\end{proof}

\end{document}